\begin{document}
\numberwithin{equation}{section}
\newtheorem{mydef}{Definition}
\newtheorem{remark}{Remark}
\newtheorem{thm}{Theorem}
\newtheorem{lemma}{Lemma}
\newtheorem{prop}{Proposition}
\newtheorem{corol}{Corollary}

\author{Anurag Sau \footnote{Agricultural and Ecological research Unit,ISI Kolkata,203,B.T Road,Kolkata 700108,India. E-mail:anuragsau@gmail.com}, Sabyasachi Bhattacharya  \footnote{Agricultural and Ecological Research Unit, Indian Statistical Institute, 203, B. T. Road, Kolkata, 700108, India. E-mail: sabyasachi@isical.ac.in},  Bapi Saha \footnote{Government College of Engineering \& Textile Technology, 4 Barrack Square, Berhampore, Email: bapi.math@gmail.com}, \footnote{Corresponding Author, Contact No. - (+91)-9339831862}}

\title{Recognizing and prevention of probable regime shift in density regulated and Allee type stochastic harvesting model with application to herring conservation  }
\maketitle

\begin{abstract}
	
An ecological system with multiple stable equilibria is prone to undergo catastrophic change or regime shift from one steady-state to another. It should be noted that, if one of the steady states is an extinction state, the catastrophic change may lead to
extinction. A suitable manual measure may control the prevention of catastrophic changes of different species from one equilibrium to another. We consider two stochastic models with linear and nonlinear harvesting terms. We inspect either density
regulation or Allee type density regulated models [Saha et al., Ecological Modelling, 2013], which have substantial applications in the herring fish population's viability study. Both the deterministic models we consider here contain bi-stability under certain restrictions, and in that case, one of the stable states is the extinction state. We assume that the dynamical system under consideration is closed, i.e., immigration and emigration are absent. The demographic noise is introduced in the system by substituting an ordinary differential equation with a stochastic differential equation model, where the birth and death rates of the deterministic process are used to obtain the instantaneous mean and variance in the stochastic differential equation. Our study reveals that, the catastrophic changes can be avoided manually by a suitable choice of handling time that will eventually help to prevent the sudden extinction of the
harvested population. The entire study is illustrated through the herring population size data obtained from the Global Population Dynamics Database (GPDD) and simulation experiment.

\end{abstract}

\noindent \subsubsection*{Keyword:} Handling time; Potential function; Stationary distribution; Theta-Logistic model; Early warning toolbox;

\section{Introduction}

The exploitation of biological renewable resources and harvesting of these resources are common practices in fisheries, forestry, etc. A suitable way is essential to carry out the adequate requirement of the human need for a long time.  The interaction between human activities and the ecosystem has a significant impact on nature at local to global scales \citep{cline2014early}. Indiscriminate harvesting of biological resources may hamper the sustainability of the different rare and endangered species. Hence, harvesting should be carried out scientifically to obtain the maximum benefit without hampering the ecological balance. Improper harvesting may also cause a catastrophic change in the population under consideration in the ecosystem. Many theoretical models suggest that the ecosystem may switch instantly from a stable state to an alternative stable state which is termed as regime shift \citep{crepin2012regime, scheffer2003catastrophic, scheffer2001catastrophic}. It may happen that if a regime shift occurs, it will be impossible to restore the stock to previous levels \citep{polovina2005climate}. Moreover, regime shift is an important feature of the physical environment of ecosystems and has the capability to make an effect on stock productivity. The shift refers to low frequency, high magnitude fluctuation in the marine ecosystems presuming changes in community composition, population abundances, and trophic structure. These changes happen in the abundance of both exploited and unexploited populations. An alternative stable state for regime shift is an important ecological phenomenon in the environment. When an alternative stable state occurs in the ecology, multiple stable equilibria exist in its underlying deterministic skeleton, and many exciting phenomena can occur \citep{may1977thresholds}. A suitable example of regime shift phenomena in ecology is lake eutrophication, where water is polluted by high concentrated nutrients such as nitrogen and phosphorus \citep{carpenter1997dystrophy, wang2018modeling}.

Generally, the Allee effect manifests a population bi-stability in a deterministic setup \citep{hilker2009allee}, whereas the theta-logistic model does not exhibit the same stability criteria. Although, in the case of the density regulated growth process, one can not rule out the possibility of the existence of multiple stable states in the presence of harvesting other than linear harvesting. Most management measures are directed at the individual stock of single species and do not take into account species interaction, such as prey-predator relationship  \citep{KAR2013134}. This motivates us to consider the important single-species models with density regulated parameters in the Allee type phenomena. A study on individual population dynamics by Sibly et al. \cite{sibly2005regulation} concludes that most of the density and per capita growth rate (pgr) relationship of different species is concave in nature. This relationship can be well defined by the density-dependent theta logistic model. In many cases, the rapid depletion of population size may reduce the species' fitness in a concave pattern with additional exposure to the Allee effect. Hence the density regulated Allee model is appropriate to explain such growth phenomena. Sau et al. extended the Allee model with density regulation around carrying capacity under harvesting. The author introduced this model in studying the extinction status of one of the economically beneficial fishes, viz. Atlantic herring \cite{sau2020extended}. We believe that, exploring the harvesting issue in the theta-logistic and Allee model must be an exciting research area to be explored as it can cover the growth profile of a wide range of species.

It is worthy to mention that, if the growth process of species follows the Allee mechanism, it enhances the morality of species \cite{chattopadhyay2016allee}. Incorporating the harvesting issue in this model can be interpreted as additional mortality. In addition to this, when the harvesting is incorporated into the system, it may act in favour of the regime shift from a higher stable state to a lower stable state. Harvesting is associated directly with the reproduction process of the species. If the reproduction success is more, the chance of effective harvesting is naturally increased. In many research articles on harvesting, authors introduced stochasticity through white noise instead of the birth-death process \citep{cooke1986one, houoptimal}. Harvesting is an important issue for the closed system as immigration and emigration do not occur directly in this case. Thus, extinction is a purely absorbing state under the birth-death set up, but this is not the case for white noise. Note that, in the second case, the population can recover from an extinction state which is not possible for a closed system.  This justifies our argument that the association of demographic stochasticity through the white noise is not pertinent to the harvesting model. This inconsistency can be avoided if the demographic stochasticity is incorporated through the birth-death process in place of any other commonly used noises.  Thus, in this work, we frame a suitable birth-death process using the deterministic model to incorporate the stochasticity in the system \citep{allen2003comparison, swift2002stochastic, tuckwell2018elementary}.

In this article, we consider two stochastic models viz. theta-logistic and Allee type density regulated models with linear and nonlinear harvesting terms. Both the linear and nonlinear harvesting strategies are considered, although the emphasis is given on nonlinear harvest rate as it is more realistic compared to the linear one \cite{KAR2013134}. The nonlinear harvest rate involves an additional parameter known as handling time. We can search for an optimum choice of density regulated parameter and the handling time so that we can restrict the species from undergoing regime shift from a higher stable state to a lower one (may be an extinction state). This leads the species under consideration to stay in a stable and safe mode instead of the risk of extinction. Similar stochastic models were also analyzed in Sau et al. \cite{sau2020extended}. The extract of the paper was mainly associated with the species' probability of extinction and the expected time to extinction. We believe that these two measures are insufficient to properly explain species' extinction threat since these two measures cannot capture the ensuing catastrophic change or possibility of regime shift, which may lead to species extinction. This may happen for a system having multiple stable equilibria. For example, suppose the estimated probability of extinction is found to be small or the expected time to extinction is large. It may happen if the species maintain its stable state at a moderately high equilibrium state. But this does not give us any clue about the possible regime shift to a lower stable equilibrium (may be an extinction state) if the system is perturbed by the noise of sufficiently large strength. This entire part was missing in the work of Sau et al.\cite{sau2020extended}.

We demonstrated the theoretical findings with the help of time-series data of the herring fish population, obtained from the Global Population Dynamic Database (henceforth GPDD, \url{http://www.sw.ic.ac.uk/cpb/cpb/dpdd.html}) with GPDD ID-1741 and GPDD ID-1772. We use the early warning signal toolbox in R to generate generic early warning signals of the time series data. The value of the control parameter, handling time, is estimated under the suggestion of the proper harvesting policy so that the possibility of extinction can be minimized. We derive the stationary distribution using the Kolmogorov forward equation. The infinitesimal mean and variance are obtained from the expressions of birth rate and death rate. In a stochastic environment, the sustainability of the resource population can be viewed with the help of its statistical distribution or by an effective potential function. The number of local minima of a potential function gives some significant input in the nature of stability and the possibility of the regime shift from one stable state to another stable state. 

The paper is organized as follows. The section \ref{The proposed model} contains some ideas of both the density regulated and density regulated Allee type model considering the linear and nonlinear harvesting. Then we consider our models in both deterministic and stochastic frameworks. In section \ref{Optimum harvesting policy and sustainability}, we provide the sketch of the stationary distribution for both linear and nonlinear harvesting. We examine sustainability through the potential function, regime shift phenomena, etc. In the next section \ref{Numerical simulations and results}, we simulate our result through numerical technique, and in section \ref{The real data and discussion}, we interpret our result through real-world data. Finally, we end with a conclusion in section \ref{Discussion and conclusion}.


\section{Model formulation}\label{The proposed model}
\subsection{The deterministic model}  
The Allee effect is a biological phenomenon identified by a correlation between population density and the mean individual fitness of a population. An Allee effect may be pervasive when the population size becomes low, and a critical threshold value exists underneath the per-capita growth rate becomes negative. The population may be vulnerable to extinction due to this critical depensation level \cite{chattopadhyay2016allee}. For example, many aquatic animals such as fishes move from one region to another in a school, as schooling is generally considered a defense policy against predator attacks. Again the indiscriminate harvesting in dense school is a major reason for the decline of the population into danger \cite{courchamp2008allee}. In such cases, the strength of density dependence can have potential implications for the preservation of animal populations \citep{saha2013evidence}. Note that, apart from the Allee threshold, the strength of the density regulation parameter $\theta$ has an important role. To capture Allee phenomena together with density regulation around carrying capacity, we consider the equation according to Saha et al. \citep{saha2013evidence}   

\noindent\begin{equation}\label{general_Alleemodel}
\frac{dx}{dt}= rx\left(\frac{x}{K}-\frac{A}{K}\right)\left(1-\left(\frac{x}{K}\right)^\theta\right),
\end{equation}
where $x(t)$ is the biomass at time $t$, $r$ is called the intrinsic growth rate, $K$ is the carrying capacity, $\theta$ is called the density regulation around carrying capacity. We call this model as Allee-Saha model (ASM) \cite{sau2020extended}.

\noindent However, in reality, most of the species follow $\theta-$logistic process \cite{sibly2005regulation}, and hence we also consider the $\theta$-logistic growth process to encompass a large number of species. In the $\theta$-logistic model, the per-capita growth rate (pgr) decreases monotonically with the increase in population size. The theta logistic model is widely used to clarify the density dependence in the real population as well as the time-series data. The general theta-logistic model can be represented as \citep{bhowmick2016simple} 
\begin{eqnarray}\label{general_model} 
\frac{dx}{dt} = rx\left[1 - \left(\frac{x}{K} \right)^\theta \right].
\end{eqnarray}  
The parameters are described earlier. The pgr growth profile of the species is convex or concave depending on the value of $\theta$. The relationship is convex or concave according as the value of $\theta$ is greater or less than 1 respectively \citep{sibly2005regulation}. When $\theta$ is equal to 1, the equation \eqref{general_model} become classical logistic model.

The sustainability of different rare and endangered species may greatly depend on harvesting strategy \cite{sau2020extended}. We consider two types of harvesting strategies viz. proportional or linear harvest rate and nonlinear harvest rate. In general, the linear harvesting strategy is determined by the catch-per-unit effort hypothesis \citep{clark1990mathematical} and written as $H(x,E)=qEx$, where $q$ is the catchability coefficient and $E$ is the harvesting effort. Note that, this harvesting strategy has some defect in the real-life situation, which may be described as follows \cite{ghosh2014sustainable}: 
\noindent \begin{enumerate} 
	\item It considers the random search for every resource population. 
	\item It assumes that every individual to the resource population is equally likely to be captured. 
	\item $H$ is unbounded with respect to $E$ for a fixed $x$. 
	\item $H$ is unbounded with respect to $x$ for a fixed $E$. 
\end{enumerate} 
Handling time is one of the important parameters to avoid the above four unrealistic features. Hence, the functional form of the harvesting policy is written as $\displaystyle H(x, E)=\frac{qEx}{aE+Lx}$\ \citep{ghosh2014sustainable}, in which $a$ is the degree of competition among the boats, fishermen and other technology used in fishing \citep{agnew1979optimal}. L is the product of capture rate and handling time \citep{abrams2000nature}. If the capture rate is assumed to be constant, $ L $ is dependent only on handling time, and hence, $ L $ can be treated as a representative of handling time. This improved version of harvesting exhibit a saturation effect with respect to both resource size and effort. It is clearly observed that, $\displaystyle h \rightarrow \left(\frac{q}{a}\right)x$\ as $E \rightarrow \infty $ when $x$ is fixed and $\displaystyle h \rightarrow \left(\frac{q}{L}\right)E$\ as $x\rightarrow \infty $ when $L$ is fixed  \citep{ghosh2014sustainable}.   
\noindent Henceforth, we call this modified harvest rate as non-proportional or nonlinear harvest rate. Introducing the harvesting phenomena in the models \eqref{general_Alleemodel} and \eqref{general_model}, we have 
\begin{eqnarray} 
\frac{dx}{dt}&=&rx\left(\frac{x}{K}-\frac{A}{K}\right)\left(1-\left(\frac{x}{K}\right)^\theta\right)-H(x),\label{general_Alleemodel_harvest}\\
\frac{dx}{dt}&=&rx\left[1 - \left(\frac{x}{K} \right)^\theta \right]-H(x),\label{general_model_harvest}
\end{eqnarray} 
where $H(x)$ is the harvesting component. In case of our model, we consider both the linear and nonlinear harvesting.

\subsection{Stochastic model}
In many cases, the system of Ito stochastic differential equations are used to study a random dynamical system. A commonly used technique to develop SDE is by studying possible changes in the system components during a small interval of time. In this case, the transitions in the system are studied over a small time interval, and a differential equation is formed by considering the time interval approaches zero \citep{ allen2008construction, allen2003comparison}. As an example, the birth-death process may be applicable to introduce stochasticity in the system.

A stochastic process is addressed as a birth-death process where jumps from a particular state (number of individuals, lineages, cells, etc.) are only allowed to neighboring states. If the number of individual or similar quantities is increased by one, it represents the birth, while a decline to the left-hand part constitutes death. These characteristics of the birth-death process help us clarify the mathematical analysis, but we apply this process in different real-world mathematical modeling. This type of model permits us to deal with any questions regarding the transition or state probabilities of the process, mean, variance, stationary distribution, the meantime of existence, probabilities of extinction, etc. The outcomes procure with these models can be compared with real-world data.  
 
The presence of the Allee effect invites the possibility of regime shift, which may lead to the extinction of species. Harvesting is associated directly with the reproduction process of the species. Demographic stochasticity is an inherent property of the species which is directly related to the reproduction success. In comparison, environmental stochasticity is a sudden event that affects species reproduction by certain chance factors. In the present investigation, we have focused on the effect of demographic stochasticity in the system. Environmental stochasticity, where random fluctuation in the environment is effective to the entire population, is not considered here.

Introducing demographic stochasticity through white noise has certain limitations. This type of stochastic perturbation biologically represents disturbances, which occurs independently of current population densities but depends on the immigration and emigration of the species \citep{abbott2016alternative}. Harvesting is such a system where immigration and emigration do not occur directly in the model. So the incorporation of demographic stochasticity through white noise is not appropriate. This complication can be avoided if the demographic stochasticity is incorporated through the birth-death process instead of any other commonly used demographic noise \citep{allen2003comparison, swift2002stochastic, tuckwell2018elementary}. In our study, both the time and state space are continuous variables. Let $p(x,t)$ indicates the probability density function (p.d.f)for the continuous random variable $X(t)$.
Then, the mathematical notation of p.d.f is written as $\int_{a}^{b} p(x,t) dx$ = Prob$\{X(t)\in [a,b]\}$ .  

It is a well known established formula \citep{allen2003comparison} that, the probability density function for the general birth and death process satisfies the Kolmogorov forward differential equation 
\begin{equation}\label{Kolmogorov} 
\frac{\partial p(x,t)}{\partial t}=-\frac{\partial[(b(x)-d(x)) p(x,t)]}{\partial x}+\frac{1}{2}\frac{\partial^2 [(b(x)+d(x))p(x,t)]}{\partial x^2} ,
\end{equation} 
\noindent where $X(t)\in (0,N), \ t\in(0,\infty), \ p(x,0)=\delta(x-x_0)$, $b(x)$ and $d(x)$ are the birth and death rates. Here  $\mu(x)$ = $b(x)-d(x)$ is called the infinitesimal mean and $\sigma^2(x)$= $b(x)+d(x)$ is called infinitesimal variance.
 
\noindent The sample paths, $X(t)$ of a stochastic process satisfy the succeeding Ito stochastic integral equation \citep{allen2003comparison,gardiner1985input,wissel1987avoid} 
\begin{eqnarray}\label{general} 
X(t)= x_0 + \int_{0}^{t}[b(X(u))-d(X(u))]du+\int_{0}^{t}\sqrt{b(X(u))+d(X(u))}~dW(u), \ X(0)=x_0>0.
\end{eqnarray}   

\noindent Here, the first integral part of the \eqref{general} is the deterministic part, i.e., Riemann integrable, and the second integral part is a stochastic Ito integral part. Here $W$ is known as Wiener process and defined as $\triangle W(t)=W(t+\triangle t)-W(t)$ and it satisfies normal distribution, $N(0,\triangle t)$. For simplicity, this equation can be expressed as the SDE \citep{allen2003comparison} 
\begin{equation}\label{stochastic_birth} 
\frac{dX(t)}{dt}=b(X(t))-d(X(t))+\sqrt{b(X(t))+d(X(t))}~ \frac{dW}{dt}. 
\end{equation} 

In a population model, the birth and death function should follow the following properties. For ASM, we assume that there exist real numbers $M$, $N$ and $A^*$ such that,
\begin{enumerate}
	\item $b(0) = d(0) = 0$ and $b(x) \leqslant 0$ for $x \geq N$,
	\item $b(x) > 0$ for $x\in (0, N]$ and $d(x) > 0$ for $x \in (0, N]$,
	\item $b(x) < d(x)$ for $x \in [0,A^*)$ and $ (M, N)$,
	\item There exists a threshold value $A^*$ such that $b(x) > d(x)$ for $x \in (A^*, M]$.
\end{enumerate}
In addition to the above four properties $b(x)$ and $d(x)$ are $C^2([0,N])$.

For $\theta$-logistic model, we assume that there exist two real numbers $M$ and $N$ with $M<N$ such that \citep{allen2003comparison} 
\begin{enumerate}
	\item $b(0) = d(0) = 0$ and $b(x) \leqslant 0$ for $x \geq N$.
	\item $b(x) > 0$ for $x\in (0, N]$ and $d(x) > 0$ for $x \in (0, N]$.
	\item $b(x) > d(x)$ for $x \in (0, M]$.
	\item $b(x) < d(x)$ for $x \in (M, N)$.
\end{enumerate}
In addition to the above four properties, $b(x)$ and $d(x)$ are $C^2([0,N])$.

The feasibility of these regularity conditions is well discussed in Appendix A.

There are multiple ways to split up the single growth rate into the difference of a birth and death rate for the $\theta$-logistic and ASM. Among many considerable forms, we have chosen the most ecological meaningful choice for our model considered here. Basically, in the $\theta$-logistic model, the per capita growth rate (pgr) is high at low population density, and it is maximum at zero population size. The pgr diminishes with the increase in population size as more populations are competing for the same limited resources. Hence, the crowding effect, which is commonly considered to be the reflection of competition is present in the species which follows the $\theta$-logistic growth process. 
On the other hand, a species the growth process of which follows the Allee mechanism generally adopt the cooperation at low density to avoid becoming extinction \cite{abbasi2019}. Thus the competition or, in other words, the crowding effect can be avoided in the Allee model.

After careful consideration of these facts and the above four properties, we obtain the expressions for the birth and death rates corresponding to \eqref{general_Alleemodel_harvest} and \eqref{general_model_harvest} according to \cite{kang2014dynamics}. We obtain the following expressions for birth and death rates respectively,

\begin{eqnarray}\label{birth-death-rate-Allee} 
b(x)=\frac{rx^2}{K^{\theta+1}}\left(K^\theta+Ax^{\theta-1}\right)\left[1-\frac{x^\theta}{K^\theta+Ax^{\theta-1}}\right], \ d(x)=\frac{rax}{K}+H(x)
\end{eqnarray}
 and
\begin{eqnarray}\label{birth_death_thetalogistic} 
b(x)= (r+1)x-\frac{x^{\theta+1}}{2K^\theta}, \ d(x)= x+\frac{x^{\theta+1}}{2K^\theta} + H(x) .
\end{eqnarray}

\noindent Using two sets of birth and death rates, the ASM and the $\theta$-logistic model with harvesting term are governed by the following stochastic differential equations

 {\scriptsize
	\begin{equation}\label{thetalogisticallee_stochastic_birth}
	dX = rX\left(\frac{X}{K}-\frac{A}{K}\right)\left[1-\left(\frac{X}{K}\right)^\theta\right] dt + \sqrt{\left[\frac{rX^2}{K^{\theta+1}}\left(K^\theta+AX^{\theta-1}\right)\left(1-\frac{X^\theta}{K^\theta+AX^{\theta-1}}\right)+\frac{raX}{K}+H(X)\right]}~dW(t)
	\end{equation}}

and

\begin{equation}\label{thetalogistic_stochastic_birth}
dX = rX\left[1 - \left(\frac{X}{K} \right)^\theta \right] dt +\sqrt{\left[(r+1)X-\frac{X^{\theta+1}}{2K^\theta}\right]+\left[X+\frac{X^{\theta+1}}{2K^\theta}+H(X)\right]}~dW(t).
\end{equation}

\noindent For the stochastic differential equations \ref{thetalogisticallee_stochastic_birth} and \ref{thetalogistic_stochastic_birth}, $X=0$ is an exit boundary point \cite{karlin}. It implies that $X(t)$ converges in finite time to the absorbing state $0$. Thus the only stationary distribution is the absorbing state at $0$ with respect to the Dirac delta measure. 

\subsection{Existance of bistability under nonlinear harvesting:}\label{stability_analysis|section}
The existence of the bistability of ASM is already discussed in \citep{sau2020extended}. Here $0$ is always a stable equilibrium point, and the nontrivial solutions are stable or unstable depends on certain conditions. But for the $\theta$-logistic model, the trivial solution is not always stable, and it also depends on some conditions.   
We have 
\begin{equation*}
\frac{dx}{dt}= rx\left(1-\left(\frac{x}{K}\right)^\theta\right)-\frac{qEx}{aE+Lx}
\end{equation*} 
The linearization around $K$ will yield the following equations
\begin{equation}
\frac{dx}{dt}=r\theta x\left (1-\frac{x}{K}\right)-\frac{qEx}{aE+Lx}
\end{equation}
The equilibrium points of the above system are $0$, $x_1$ and $x_2$ where $x_1$, $x_2$ are given by


$x_1 =\frac {r\theta \left(L-\frac{aE}{K}\right)+ \sqrt{r^2\theta^2(\left(L-\frac{aE}{K}\right)^2-\frac{4rLE}{K}(q-ra)}}{\frac{2r\theta L}{K}}$ and 

$x_2 =\frac {r\theta \left(L-\frac{aE}{K}\right)- \sqrt{r^2\theta^2(\left(L-\frac{aE}{K}\right)^2-\frac{4rLE}{K}(q-ra)}}{\frac{2r\theta L}{K}}$

If $(q-ra)>0$ and $\left(L-\frac{aE}{K}\right) >0$ then $x_2$ exists. But when $(q-ra)>0$, $0$ is stable \cite{sau2020extended}. This shows that when $(q-ra) >0$, the two stable equilibrium $0$ and $x_1$ is separated by the unstable equilibrium point $x_2$. On the other hand if $(q-ra)<0$, $x_2$ does not exist and $0$ becomes unstable.

\section{Stationary distribution and possibility of regime shift}\label{Optimum harvesting policy and sustainability}

Stationary distribution is one of the important measures to visualize the model’s behavior in a long time. It is the probability distribution that remains unchanged as time progress. The potential function is an alternative measurement of the stationary distribution, which is equally capable of determining the possibility of the regime shift of a system. The increase in the asymmetry in the potential function is the indication of catastrophic change or regime shift \cite{guttal2008changing, sharma2015stochasticity}. Moreover, the local minima of a potential function indicate the location of the stable equilibrium points of the deterministic counterpart of the model.

\subsection{Stationary distribution: $\theta$-logistic model}

\subsubsection*{Case 1: Linear harvest rate}

Consider $\{X(t), t\geqslant0\}$ be a continuous stochastic variable with transition density $p(t,x,y)$ for $t\geqslant0$, which satisfies the  Kolmogorov forward equation 
\begin{equation}\label{forward_kolmogorov} 
\frac{\partial p}{\partial t}=-\frac{\partial [p(t,x,y)(\mu(x))]}{\partial x}+\frac{1}{2}\frac{\partial^2 [(p(t,x,y)\sigma^2(x))]}{\partial x^2}. 
\end{equation} 
Where $\mu(x)$ is the drift coefficient or infinitesimal mean and $\sigma^2(x)$ is infinitesimal variance. Hence, there exist stationary density $f(.)$ (may or may not exist), which satisfy the following condition 
$\displaystyle f(x)=\int f(y)p(t,x,y)~dy$ for all $t\geqslant0.\ $\ 

In the case of $\theta$-logistic harvesting model \eqref{general_model_harvest}, where the parameters are represented as earlier, the stochastic differential equation corresponding to the deterministic model \eqref{general_model_harvest} is given by\eqref{thetalogistic_stochastic_birth}. We find the stationary distribution for the stationary state by the Kolmogorov forward equation and equating 
$\displaystyle \frac{\partial p}{\partial t}=0$.\  \\

\noindent We define the probability density function at stationary state as $P_s(x)$. 
In the case of the model  \eqref{Kolmogorov}, 
$\mu(x)=[b(x)-d(x)]$ and 
$\sigma^2(x)=[b(x)+d(x)].$

\noindent Now,\\
\begin{eqnarray*}
&& \frac{\partial P_s(x)}{\partial t}= 0\\
&&\Rightarrow\frac{d}{dx}(P_s(x)\mu(x))=\frac{1}{2}\frac{d^2}{dx^2}(P_s(x)\sigma^2(x))\\
&&\Rightarrow P_s(x)\mu(x)= \frac{1}{2}\frac{d}{dx}[P_s(x)\sigma^2(x)]+C \\
&&\Rightarrow P_s(x) = Nexp(-2U(x)) \mbox{, Assuming $C=0$.}
\end{eqnarray*}

\noindent Where $N$ is the normalizing constant and $\displaystyle U(x)=\int\frac{\sigma(x)\frac{d\sigma}{dx}-\mu(x)}{\sigma^2(x)}dx$\ \label{potential_all}, is the effective potential function \citep{guttal2008changing}. In particular for model \eqref{thetalogistic_stochastic_birth}, with $H(x,E)=h(E)x$,  
\begin{equation*}
P_s(x) =  N\left[\frac{1}{x}\exp\left(\psi(x)\right)\right],
\end{equation*}
where,
$N(E)$ is the normalizing constant and defined as
\begin{equation*}
N(E) = \left[\int_{1}^ \infty \frac{1}{x} \exp(\psi(x,E))dx\right]^{-1},
\end{equation*}
\noindent and \\
\begin{equation}
\psi(x,E) = \frac{2(r-h)}{r+2+h}{(x-x_0)} - \frac{2r}{(r+2+h)(\theta+1)K^\theta}(x^{\theta+1} - x_0^{\theta+1}).
\end{equation}

\subsubsection*{Case 2: Nonlinear harvest rate}

In the previous section, we give the expression of stationary distribution when the harvest is linear. Traditionally, the harvesting policy is generalized by the catch-per-unit effort hypothesis \citep{clark1990mathematical}. So the handling time is sensible for the bounded functional response. In this section, we consider the nonlinear harvesting function as $\displaystyle ~H(x,E) = \frac{qEx}{aE+Lx}$.\ \\
Similar by previous technique, the stationary probability density function is given by,
\begin{equation}\label{potential_function}
P_s(x)=N \exp\left[-2 U(x)\right].
\end{equation}
\noindent Here $N$ is the normalizing constant and $U(x)$ is called the effective potential function, which is defined as \citep{guttal2008changing}
\begin{equation}\label{Potential}
 U(x)=-\left[\int_{x_0}^{x} \frac{\mu(x)- \sigma(x) \sigma'(x)}{\sigma^2(x)}dx\right].
\end{equation}
Infinitesimal variance $\sigma^2(x)$ is defined in the previous section and $\displaystyle \sigma'(x)=\frac{d\sigma}{dx}$.\ In this case, the $\theta$-logistic model with nonlinear harvest rate takes the form, 
\begin{eqnarray} \label{birth_death}
\frac{dx}{dt}&=&rx\left[1-\left(\frac{x}{K}\right)^\theta\right]-\frac{qEx}{aE+Lx} \nonumber \\  
&=&\left[(r+1)x-\frac{x^{\theta+1}}{2K^\theta}\right]-\left[\left(x+\frac{1}{2}\frac{x^{\theta+1}}{K^\theta}\right)+\frac{qEx}{aE+Lx}\right].\nonumber\\
\end{eqnarray}
From the equation \eqref{birth_death}, we can write the birth and death terms as:
\begin{eqnarray*}
  b(x)= (r+1)x-\frac{x^{\theta+1}}{2K^\theta}~,~~
  d(x)= \left(x+\frac{x^{\theta+1}}{2K^\theta}\right)+\frac{qEx}{aE+Lx}.
  \end{eqnarray*}
  So,
  \begin{eqnarray*}\nonumber\label{Fokkerplank}
  b(x)+d(x)=\sigma^2(x)=\left[(r+2)x+\frac{qEx}{aE+Lx}\right]\\
  \Rightarrow \sigma(x)\sigma'(x)= \frac{1}{2}\left[(r+2)+\frac{aqE^2}{(aE+Lx)^2}\right]
  \end{eqnarray*}
\noindent The Kolmogorov forward equation in this case can be written as,
\begin{equation}
\frac{\partial p}{\partial t}=-\frac{\partial}{\partial x}\left[\left(rx\left[1-\left(\frac{x}{K}\right)^\theta\right]-\frac{qEx}{aE+Lx}\right) p(x,t)\right]+\frac{1}{2}\frac{\partial^2}{\partial x^2}\left[\left((r+2)x+\frac{qEx}{aE+Lx}\right)p(x,t)\right]\nonumber
\end{equation}

\noindent So from \eqref{Potential} we obtain the following expression,\\
\begin{align*}
U(x) &=-\left[\int_{x_0}^{x}\frac{rx\left(1-(\frac{x}{K})^\theta\right)-\frac{qEx}{aE+Lx}-\frac{r+2}{2}-\frac{aqE^2}{2(aE+Lx)^2}}{(r+2)x+\frac{qEx}{aE+Lx}}\right]dx\\
&=-\left[\int_{x_0}^{x}\frac{rx(aE+Lx)^2\left(1-(\frac{x}{K})^\theta\right)-qEx(aE+Lx)-\frac{r+2}{2}(aE+Lx)^2-\frac{aqE^2}{2}}{(r+2)x(aE+Lx)^2+qEx(aE+Lx)}\right]dx
\end{align*}
\noindent The exact mathematical expression of $U(x)$ can not be established due to the presence of nonlinearity in the population biomass. To resolve the problem, we linearize the term $\left(1-(\frac{x}{K})^\theta\right)$. The approximation of $\left(1-(\frac{x}{K})^\theta\right)$ can be made using binomial expansion as follows:

\begin{align*}
1-\left(\frac{x}{k}\right)^\theta 
&= 1-\left(\frac{K-(K-x)}{K}\right)^\theta \\ 
&=1-\left[1-\theta\left(\frac{K-x}{K}\right)+\frac{\theta(\theta+1)}{2}\left(\frac{(K-x)}{K}\right)^2+\ldots\right] ; 0<x<2K\\
&=(\theta-\frac{\theta}{K}x).\\
\end{align*}

\noindent Substituting the value in the above expression of $U(x)$, we get,
\begin{align}
U(x) &= -\left[\int_{x_0}^{x}\frac{r\theta x (1-\frac{x}{K}) (aE+Lx)^2-qEx(aE+Lx)-\frac{r+2}{2}(aE+Lx)^2-\frac{aqE^2}{2}}{x(aE+Lx)[(r+2) (aE+Lx)+qE]}\right]dx\\ \nonumber
&= -\frac{r \theta}{K}\left[\frac{2(D(D-C)+BC^2)\log(Cx+D)+Cx(Cx+2AC-2D)}{2C^3} \right]+ \frac{1}{2}\log\left((r+2)x+\frac{qEx}{aE+Lx}\right)\\
& =\frac{r \theta L}{K}\left[\frac{2(D(D-C)+BC^2)\log(Cx+D)+Cx(Cx+2AC-2D)}{2C^3} \right]+\frac{1}{2} \log\left((r+2)x+\frac{qEx}{aE+Lx}\right) \nonumber
\end{align}

Here $A= \frac{aE-KL}{L}$, $B=\frac{(q-a)KE}{L}$, $C=L(r+2)$, $D= qE+aE(r+2)$.
Note that, this form of U(x) is valid for $0<x<2K$.

\subsection{Stationary distribution: ASM}\label{stationary distribution}

\subsubsection*{Linear and nonlinear harvest rate}

In this section, we will derive the stationary distribution under nonlinear harvest rate and the corresponding stationary distribution in case of linear harvest rate can be easily obtained as a special case of the former distribution with $L$=0. Here, we will give some outline or sketch of the stationary distribution for ASM. The derivation of the analytic expression will be difficult in this case, so we mainly emphasize the numerical technique to analyze the potential function.    
For the ASM with harvesting, the forms of birth and death terms are 

\noindent $\displaystyle b(x)= \frac{rx^2(K^\theta+Ax^{\theta-1})}{K^{\theta+1}}\left(1-\frac{x^\theta}{K^\theta+Ax^{\theta-1}}\right)$,~~~ 
$d(x)=\frac{rAx}{K}+H(x)$.\       \\             \label{Allee-potential}
\noindent The harvest term may be linear or non linear according to our choice. Considering nonlinear harvest rate, the mean and variance can be written as\\
 $\mu(x)=rx\left(\frac{x}{K}-\frac{A}{K}\right)\left[1-\left(\frac{x}{K}\right)^\theta\right]-\frac{qEx}{aE+Lx}$\\ and\\
$ \sigma^2(x)=\frac{r(K^\theta+Ax^{\theta-1})}{K^{\theta+1}}x^2\left(1-\frac{x^\theta}{K^\theta+Ax^{\theta-1}}\right)+\frac{rAx}{K}+\frac{qEx}{aE+Lx}$~~

$ \Rightarrow \sigma(x)\sigma'(x)= r\left[\left(\frac{2K^{\theta}x+A(\theta+1)x^\theta}{K^{\theta+1}}\right)\left(1-\frac{x^\theta}{K^\theta+Ax^{\theta-1}}\right)-\left(\frac{K^\theta x^2+Ax^{\theta+1}}{K^{\theta+1}}\right)\left(\frac{(K^\theta+Ax^{\theta-1})\theta x^{\theta-1}-A(\theta-1)x^{2(\theta-1)}}{(K^\theta+Ax^{\theta-1})^2}\right)\right]+\frac{rA}{K}+\frac{qE^2a}{(aE+Lx)^2}$.

\noindent Substituting these values in the equation \eqref{Potential}, we will obtain the potential function. This expression is not solvable analytically. So we will analyze this properly through numerical technique, which is discussed in section \ref{Numerical simulations and results}. The potential function and hence the stationary distribution in case of proportional harvesting can be achieved by assigning $L=0$.

\section{Numerical simulations and results}\label{Numerical simulations and results}

\subsection{ Stochastic perturbation on density regulated Allee type model}

\noindent In nature, the dynamical systems are dependent on external interference. Many inconveniences such as pest outbreaks, changes in weather conditions, rainfall, fire, etc., may be harmful to the populations in multiple ways, such as exterminate a portion of it. If the system has a single steady-state, the system will settle back to the same state after such a perturbation. Whereas, if the system has multiple stable states, such as the Allee effect, a sufficiently large disturbance of the ecological state may put the ecosystem into the basin of attraction of another stable state. The plausibility does not depend only on the perturbation but also on the size of the basin of attraction. In terms of stability landscapes, if the valley is small, a small perturbation may be sufficient to displace the ball far enough to push it over the hill, resulting in a shift to the alternative stable state. For example, in fisheries, increased harvesting on piscivores can result in a shift from high piscivores low-planktivore to the state low-piscivores high planktivore regime \citep{walters2001cultivation}. Several fisheries are suspected of having suffered this type of transition \citep{steele2004regime, barange2008regime}.

In general, when the growth process of the species follows ASM in the absence of harvesting, the species has two stable equilibrium points, 0 and K, separated by an unstable equilibrium point at Allee threshold $A$. Here we introduce the nonlinear harvesting strategy, which is obtained incorporating handling time during the harvesting process. 

 Since the exact expression for non zero equilibrium points is intractable, we rely on the bifurcation diagram in figure \ref{bifurcation}. When the handling time is small, the extinction state is the only stable equilibrium point. However, when we increase the handling time, there exists another stable equilibrium point, which is near the carrying capacity i.e, two stable equilibrium is separated by an unstable equilibrium. This observation establishes the fact that, if handling time (L) lies below a certain threshold value corresponding to the bifurcation point (as indicated in figure \ref{bifurcation}), extinction is inevitable. This bifurcation figure is also discussed in the work of Sau et al. \cite{sau2020extended}. In the long run, the distribution of population size can be represented by the stationary distribution of the population size. However, the computation of stationary distribution for ASM is very complex, as we observe in section \ref{Optimum harvesting policy and sustainability}. To avoid this complexity, we compute the potential function of the population size, which possesses just an opposite behavioral characteristic of the stationary probability density function. The potential function attains minima where the stationary distribution has mode and vice-versa. 
In this case, we observe that if $L$ increases, the minimum value of the potential function at the lower equilibrium point remains the same, although it decreases at the higher equilibrium point. It indicates that the population is more likely to stay around a higher equilibrium point if $L$ increases (figure \ref{probability_density_function}(a)). In addition, an increase in the minimum value at a higher equilibrium point indicates a higher chance of regime shift from upper stable equilibrium to lower stable equilibrium, which is the extinction state. Furthermore, it is observed that if $L$ is increased, the basin of attraction for the upper stable equilibrium point increases ( figure \ref{Basin of attraction_Allee}). This shows that a proper handling time (sufficiently large) can prevent possible catastrophic change and thus restrict population extinction. This argument is validated by numerical simulation in data analysis.

A similar result can be observed for $\theta$ also. If $\theta$ increases the intra-species competition decreases (since $(x/K)^\theta$ decreases for $x<K$) and naturally, the population is more likely to stay at higher equilibrium density (see figure \ref{probability_density_function} (b)).

\begin{figure}[H]
	\begin{center}
		\includegraphics[height = 70mm, width =150mm]{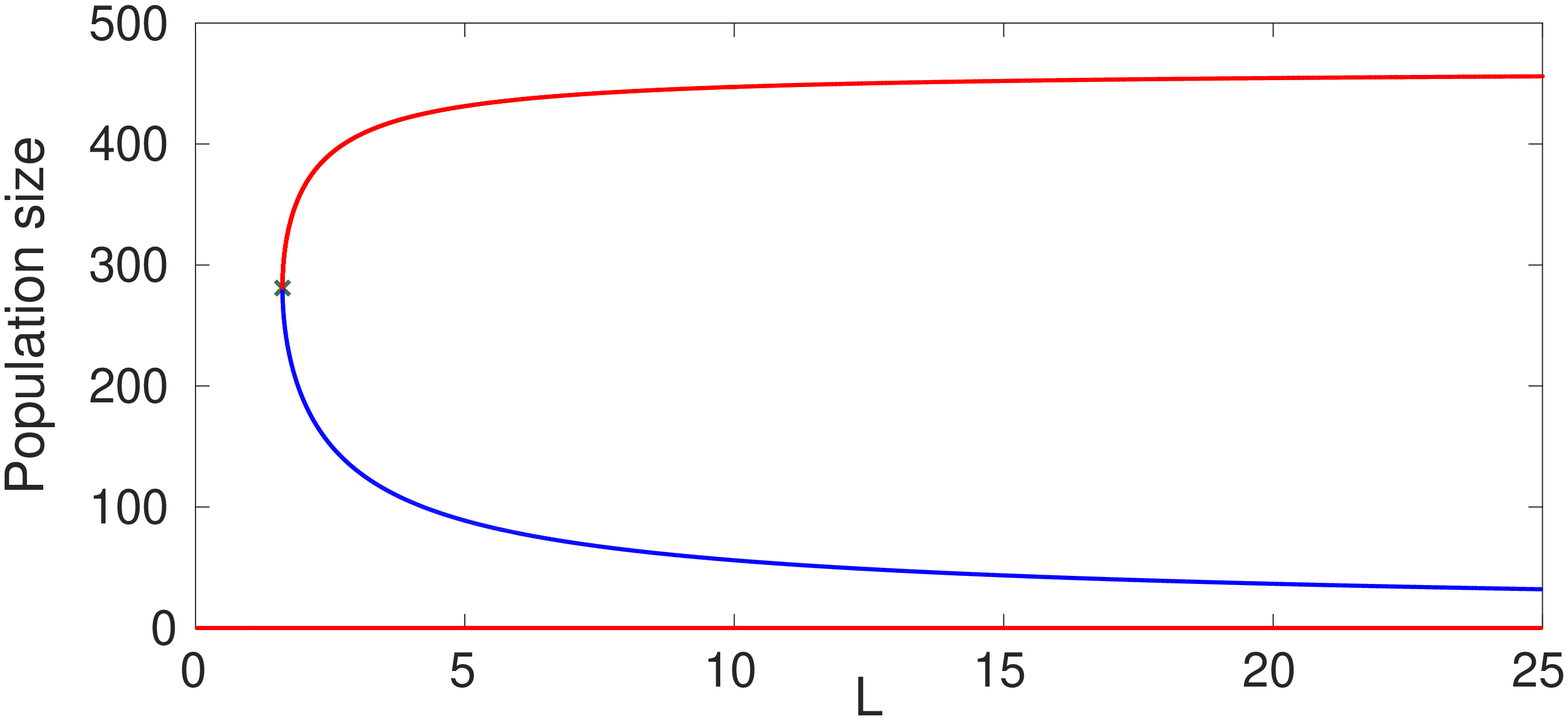}
	\end{center}
	\caption{ This figure represents the bifurcation diagram (saddle-node) of equilibrium points vs. $L$. The red dots represent the stable equilibrium points, and the blue dots represent unstable equilibrium points. The other parameters are $A=345$, $r=0.40$,
		$K=460$,
		$a=1.01$,
		$q=2.106$,
		$E=.446$,
		$\theta=0.017.$
	}
	\label{bifurcation}
\end{figure}

\begin{figure}[H]
	\begin{center}
		\includegraphics[height=2.2in,width=3in]{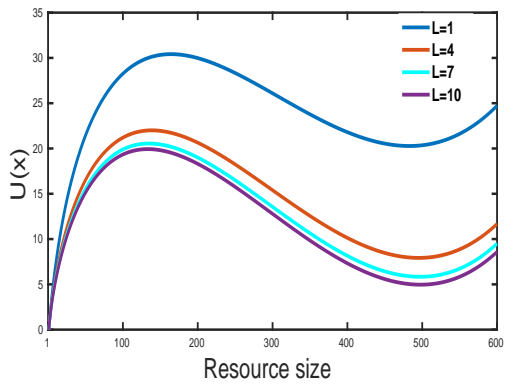} (a)
		\includegraphics[height=2.2in,width=3in]{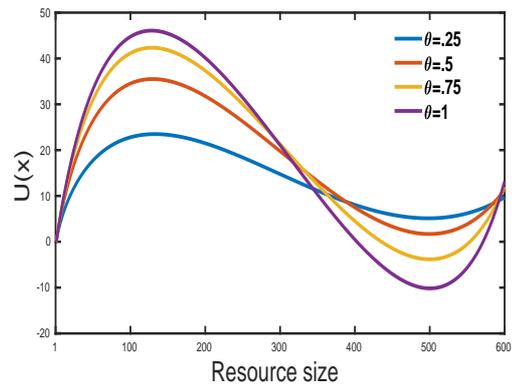}  (b)
	\end{center}
	\caption{The panels of figures exhibit the potential functions for different $\theta$ and $L$. In figure \ref{probability_density_function}(a), the set of parameter values are $r$ =0.407, $K$  = 500.484, $A$=80, $q$=2.10, $E$ =.446, $a$ =1.019, $L$=12. For figure \ref{probability_density_function}(b), the parameters are $r$ =0.407, $K$  = 500.484,  $\theta$ =.2, $A$=80, $q$=2.10, $E$ =.446, $a$ =1.019.}
	\label{probability_density_function}
\end{figure}

\begin{figure}[H]
	\begin{center}
		\includegraphics[height = 70mm, width =150mm]{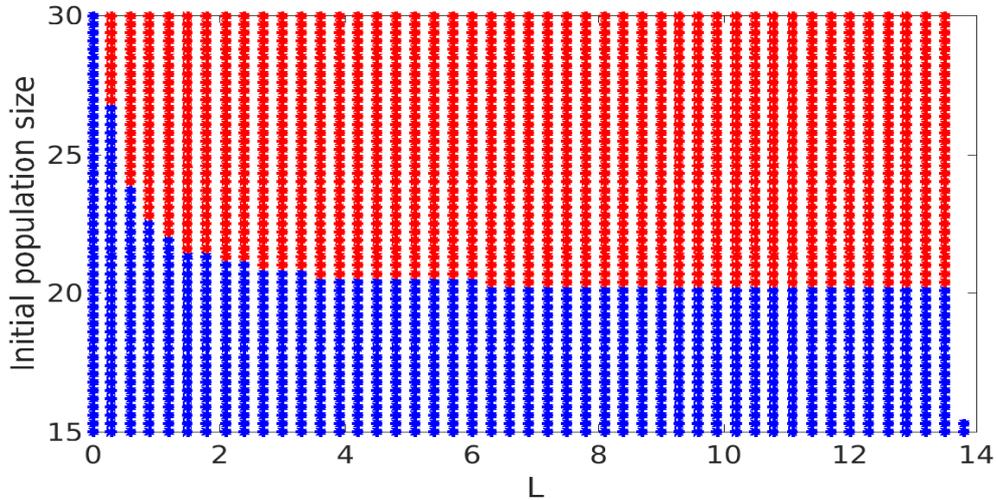}
	\end{center}
	\caption{This figure represents the basin of attraction for the Allee model with harvesting. The blue dots represent the basin of attraction for trivial stable equilibrium point, and the red dots represent that of the nonzero stable equilibrium. 
	}
	\label{Basin of attraction_Allee}
\end{figure}

\subsection{Density regulated harvesting model under stochastic environment}

  
In the previous section, we demonstrate the effect of different relevant parameters on the possible regime shift for ASM, which possesses bi-stability. In this section, we consider the $\theta$-logistic harvesting model, which may possess multiple stable equilibrium points under a nonlinear harvesting strategy. We also analyze the effect of different relevant parameters on the stationary density, which is obtained in the range of population size $[1,\infty)$. Figure \ref{pdfthetalogistic}(a) shows that, for linear harvest rate, the resource population becomes more likely to stay around the carrying capacity as $\theta$ increases. This phenomenon occurs because, increase in $\theta$ diminishes the intra-species competition, which in turn increases the species birth rate. It should be noted that, when $\theta$ is large, the species such as mammals generally likes to stay around carrying capacity and are more stable around it \citep{sibly2005regulation}. When a linear harvesting strategy is applied, for small values of $\theta$, the probability of staying near a low population size is high, which is alarming for the population. This problem can be overcome if nonlinear harvesting is applied. We observe that, if a suitable amount of handling time is applied, the probability of staying near population size 1 is negligible, although density regulation parameter $\theta$ is small (figure \ref{pdfthetalogistic}(b)). Furthermore, it can be observed (figure \ref{pdfthetalogistic} (c)) that if the handling time ($L$) is increased keeping all other parameters fixed, the population size has a tendency to stay near carrying capacity with high probability. Thus, we may enhance the possibility of survivability of species whose density regulation around carrying capacity is small by adopting nonlinear harvesting instead of linear. Density regulation around carrying capacity ($\theta$) is an inherent property of the species and for most of the species $\theta<1$ \citep{sibly2005regulation}. So we may conclude that nonlinear harvesting is beneficial for the sustainability of species.

A dynamical system faces a catastrophic shift when it has multiple stable equilibria separated by unstable equilibria. The theta-logistic model with nonlinear harvesting may possess multiple stable equilibria under certain choices of the parameters. To avoid mathematical complexity, we consider the linearized model as discussed in the section \ref{stability_analysis|section}. In this case, we observe that if $(q-ra) >0$, there exist another two nonzero equilibrium points $x_1$, $x_2$ and $x_1$ is stable. This shows that under the constraint $(q-ra) >0$, the two stable equilibrium $0$ and $x_1$ are separated by an unstable equilibrium $x_2$. The exact expression for the potential function is also obtained for this linearized model.

\noindent The potential function gradually loses its symmetry when the system enters into the unstable region from the stable region. The onset of asymmetry in the potential function indicates a high chance of catastrophic change in the system \citep{guttal2008changing}. This may lead the system to become extinct. The change in equilibrium density of the resource population with respect to $L$ and $\theta$ is depicted in figure \ref{equilibrium}, which is also given in \cite{sau2020extended}. The equilibrium density marked with red color represents the stable equilibrium, and those marked in blue color are considered as unstable points. As the choice of the parameters is shifted from the stable to the unstable region, the potential function loses its symmetry, which is evident in figure \ref{potentialfunction}. In addition, it can be seen in figure \ref{fig6} that an increase in L leads to an increase in the symmetry of the potential function, which establishes the fact that increased handling time reduces the chance of catastrophic regime shift. As the parameter theta is an inherent feature of the concerned species, we do not have any control over it. We can adjust handling time to avoid catastrophic change when a system contains bistable equilibrium points, and the species maintains its stable existence at the upper stable equilibrium point. We observe from figure \ref{Basin of attraction} that if L is increased, the basin of attraction of upper stable equilibrium also increases. This means that the population can withstand a larger depletion in population than that when L is small.   

 The linear (proportional) harvesting strategy is a special case of nonlinear harvesting when $L$ is chosen to be $0$. Hence, in every simulation activity, $L$ may be chosen to be $0$ to incorporate the proportional harvesting policy. It is observed that, when the parameters $L$, $\theta$ are chosen from the stable region (marked in red in figure \ref{equilibrium}), the population size is almost normally distributed with a mean near carrying capacity (figure \ref{stochastic_samplepath_histogram}(a)). If the chosen parameters are away from the stable region, the said distribution becomes asymmetric, and the clustering of population near $0$ increases (figure \ref{stochastic_samplepath_histogram}).

\begin{figure}[H]
	\begin{center}
		\includegraphics[height = 70mm, width =150mm]{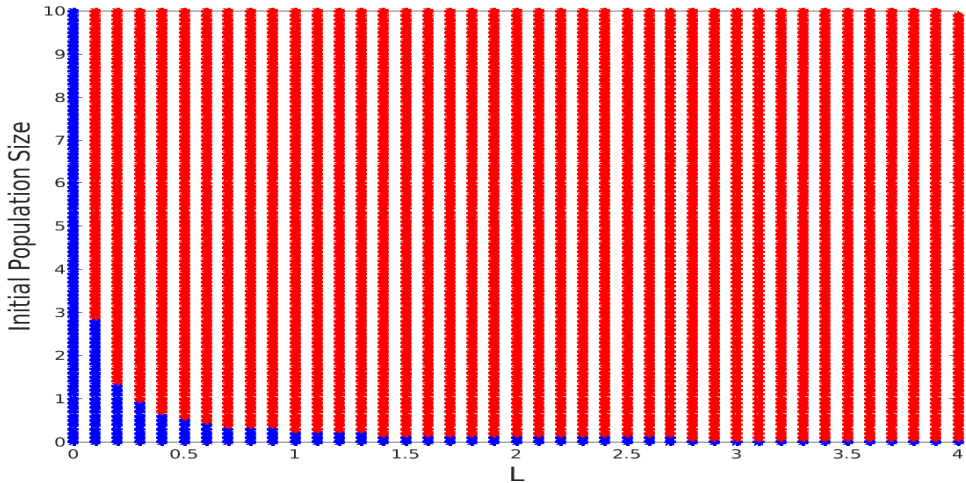}
	\end{center}
	\caption{ This figure represents the basin of attraction for the $\theta$-logistic model with harvesting. The blue dots represent the basin of attraction for trivial stable equilibrium point, and the red dots represent that of the nonzero stable equilibrium.
	}
	\label{Basin of attraction}
\end{figure}

\begin{figure}[H]
\begin{center}
\includegraphics[height = 55mm, width =65mm]{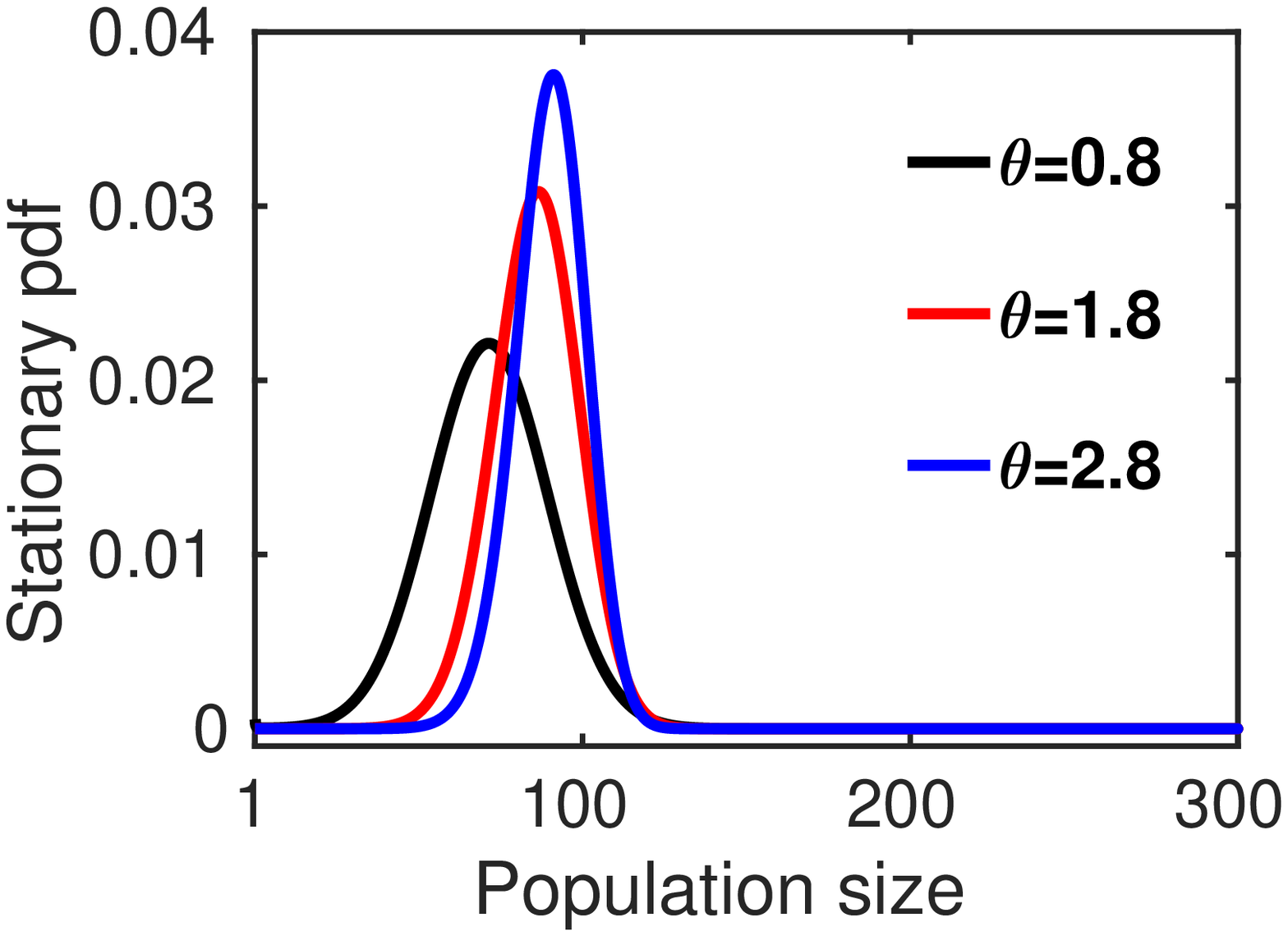}(a)
\includegraphics[height = 53mm, width = 65mm]{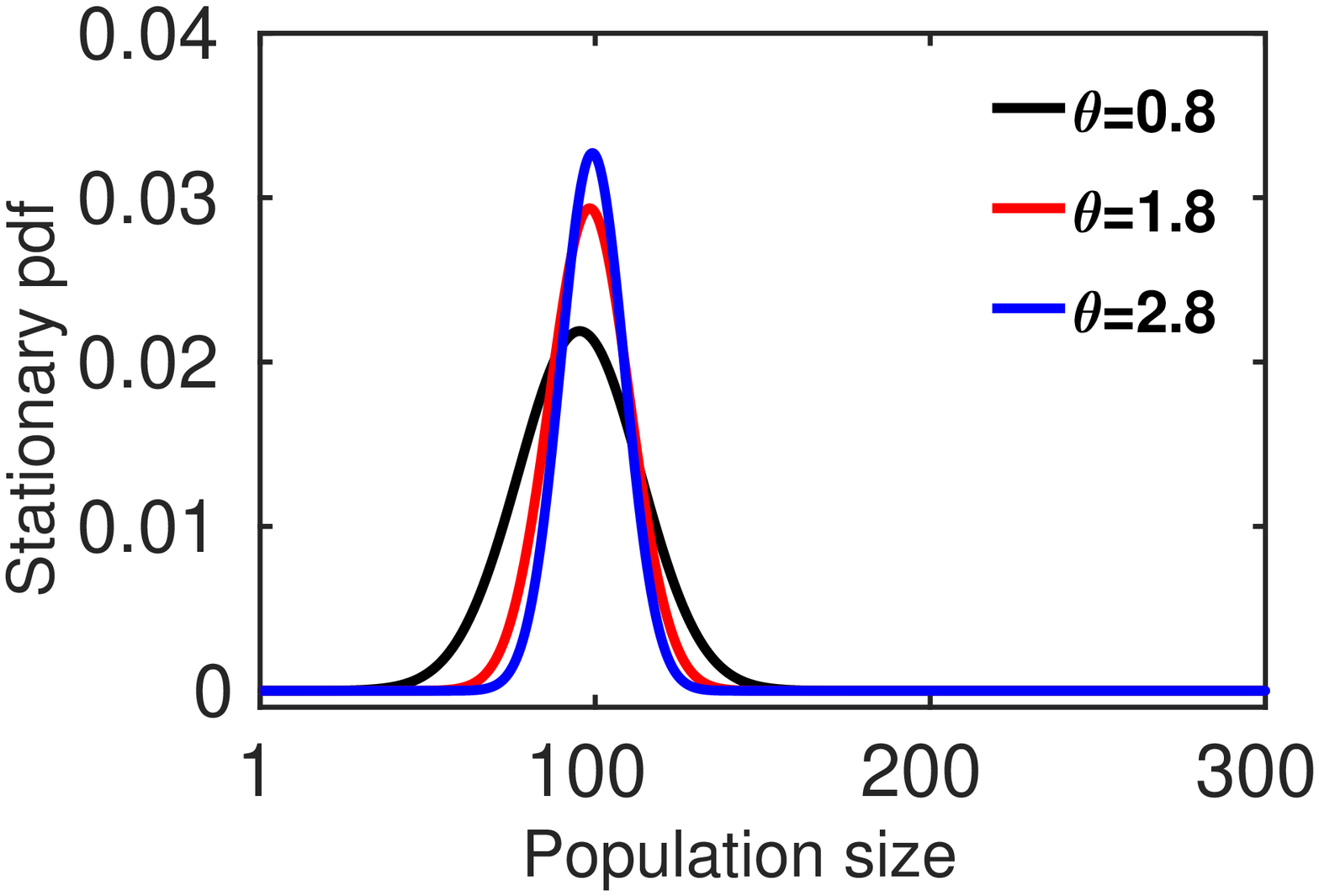}(b)
\includegraphics[height = 55mm, width =65mm]{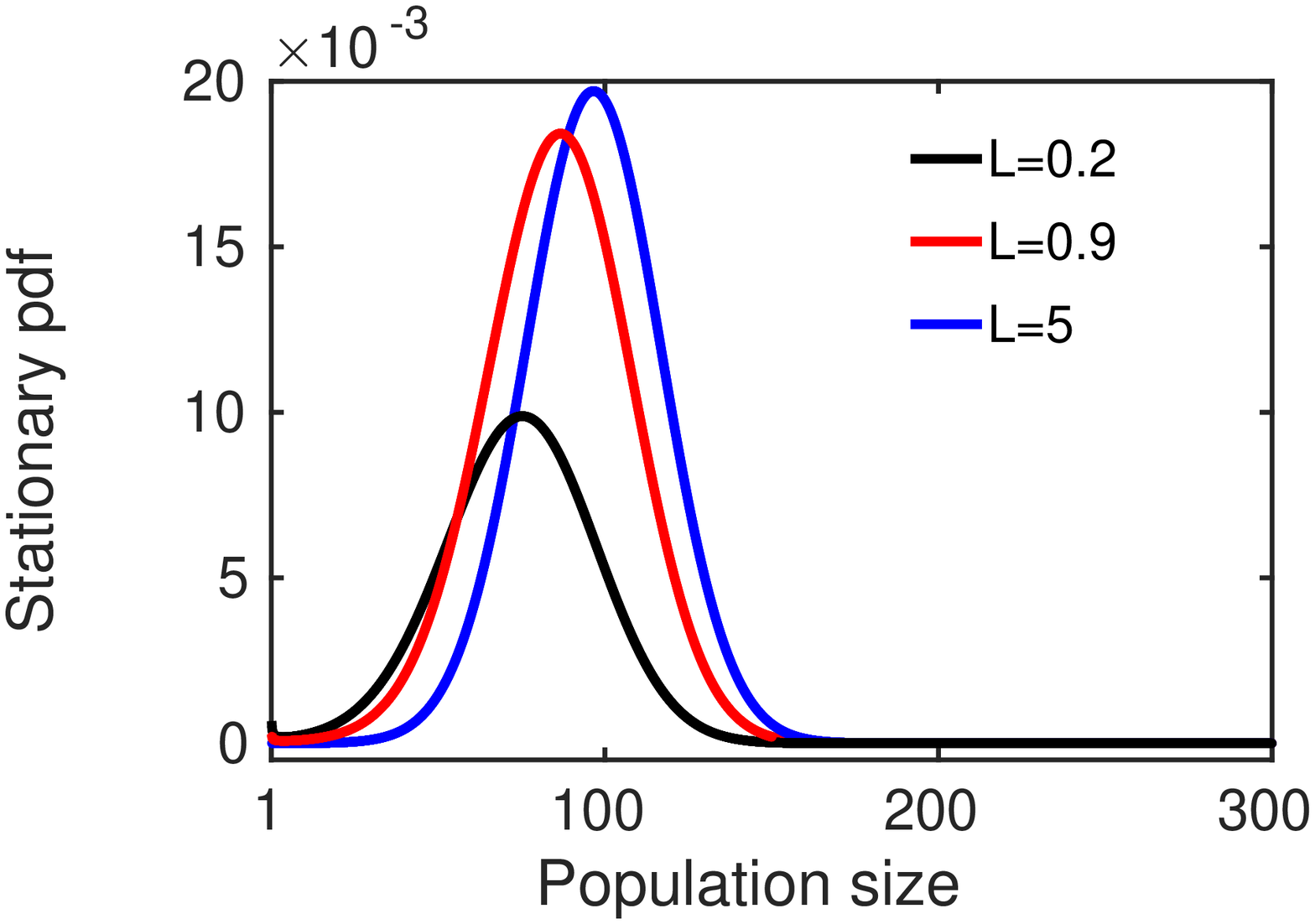}(c)
\end{center}
\caption{(a) Linear case, (b) Nonlinear case, (c)Nonlinear case.
		This figure represents the stationary probability distribution against the resource population size. The figure \ref{pdfthetalogistic}(a) depicts the stationary distribution for different $\theta$, when harvest rate is linear and the parameter values are $r$=0.5, $K$=100, $q$=1, $E$=3.5. The figure \ref{pdfthetalogistic}(b) represent the  stationary distribution for different $\theta$, when the harvest rate is nonlinear. Here the corresponding parameter values are $r$=0.5, $K$=100,  $q$=1, $a$=3.5, $E$=3.5, $L$=1.12. In the figure \ref{pdfthetalogistic}(c), we plot the stationary distribution for different $L$ in the case of nonlinear harvest rate. Here the other parameters are $r$=0.5, $\theta$=1.4, $K$=100, $q$=1, $a$=3.5, $E$=3.5.}
\label{pdfthetalogistic}
\end{figure}

\begin{figure}[H]
\begin{center}
\includegraphics[height = 80mm, width =100mm]{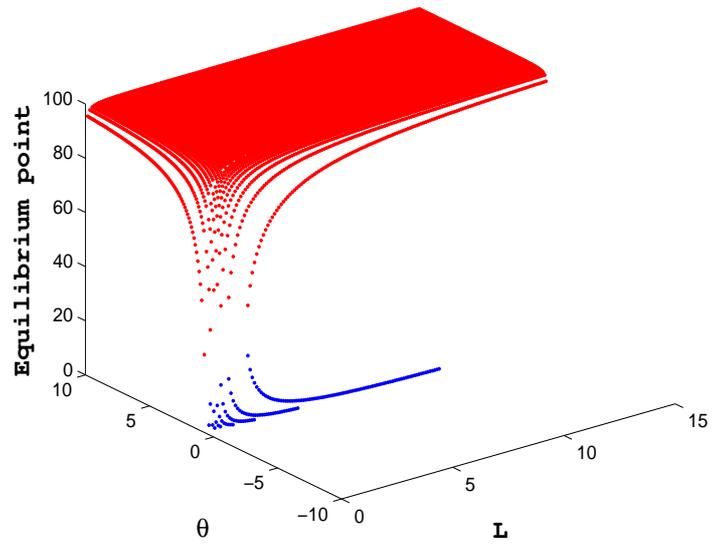}
\end{center}
\caption{The plot exhibits the equilibrium population size against the two important parameters $L$, $\theta$. The red dots indicate the stable equilibrium states, whereas the blue dots represent unstable equilibrium.}
\label{equilibrium}
\end{figure}

%

\begin{figure}[H]
\begin{center}
\includegraphics[height=2.5 in,width=3in]{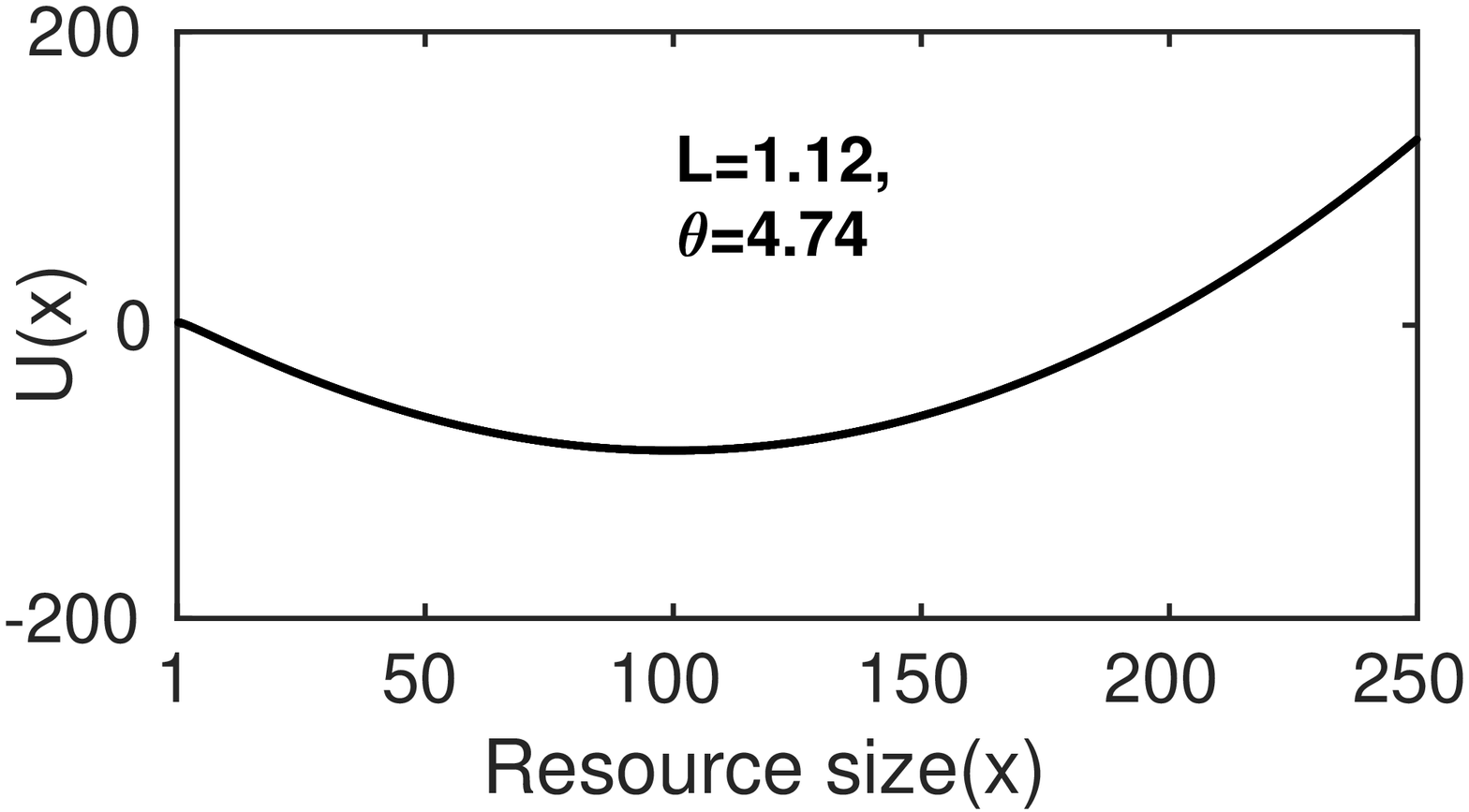}(a)
\includegraphics[height=2.5 in,width=3in]{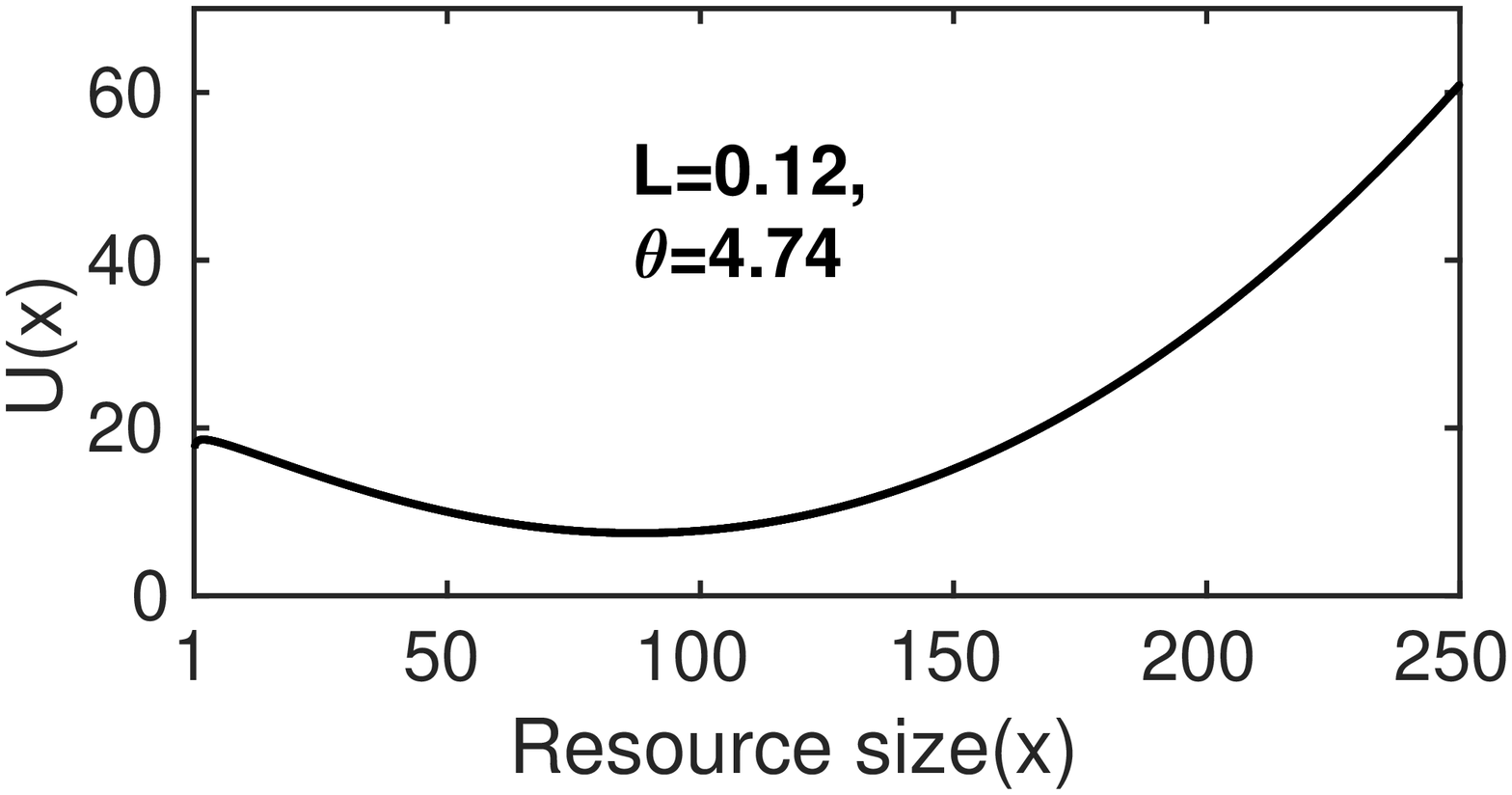}(b)
\includegraphics[height=2.5in,width=3in]{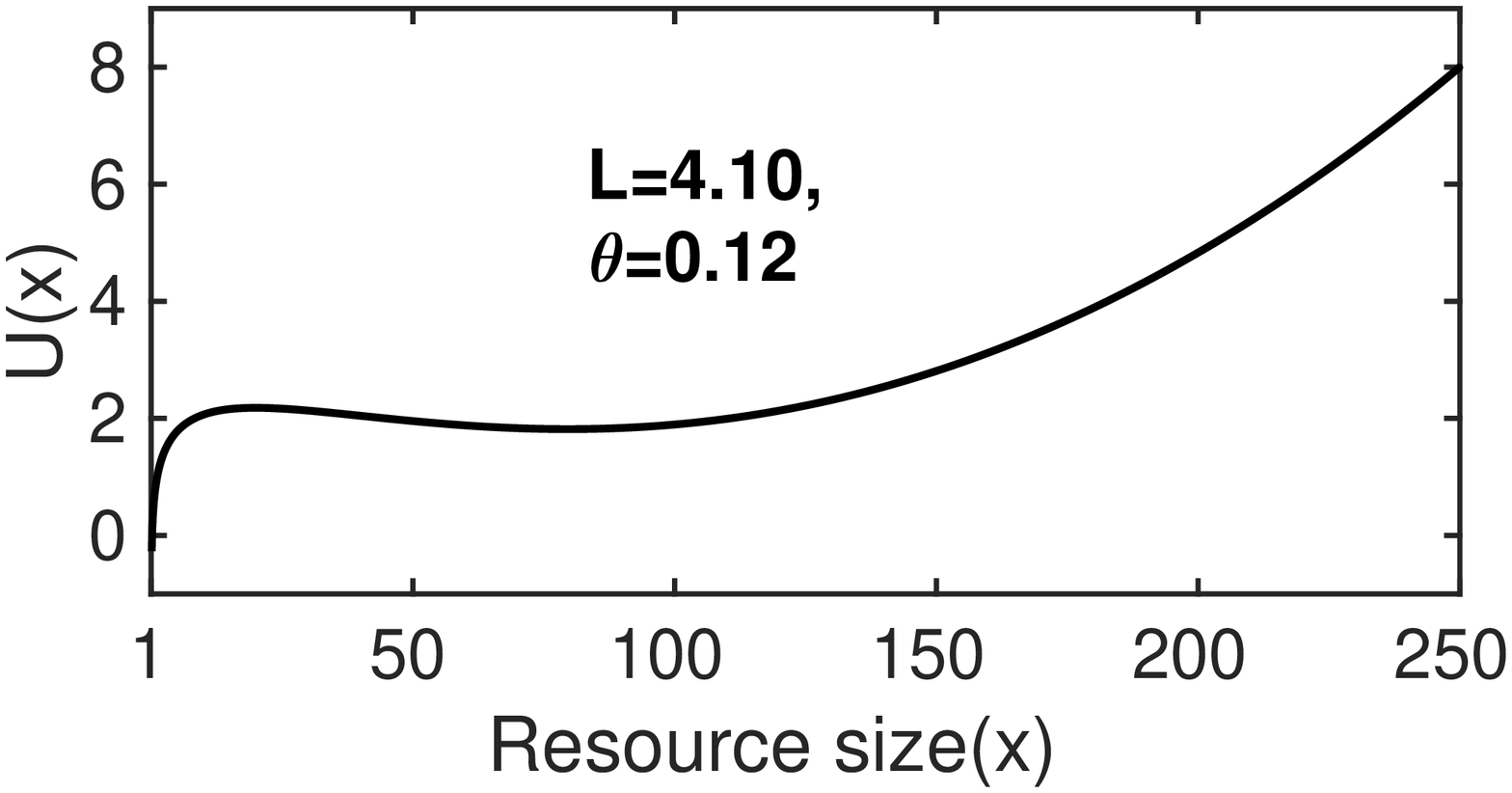}(c)
\end{center}
\caption{The panels of figures exhibit how the asymmetry in the potential landscape picture enlarges in the different pathways to catastrophic change. The corresponding parameter values are $r$=0.8, $K$=100, $a$=0.5, $q$=0.5, $E$=0.5. We consider the three points from the figure \ref{equilibrium} for different $\theta$ and $L$, i.e., we consider the points through stable equilibria to unstable equilibria and obtain the corresponding potential functions.}
\label{potentialfunction}
\end{figure}

\begin{figure}[H]
\begin{center}
\includegraphics[height=2.5 in,width=3in]{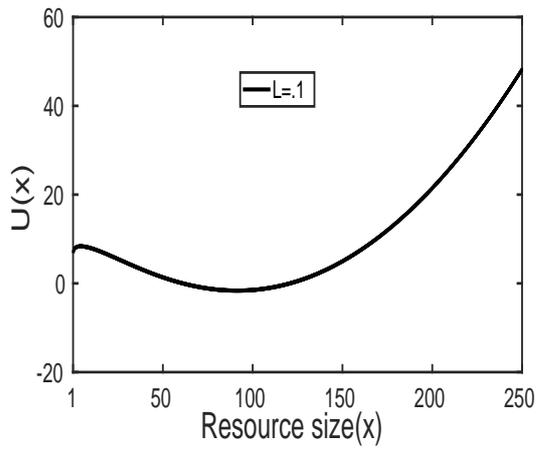}(a)
\includegraphics[height=2.5 in,width=3in]{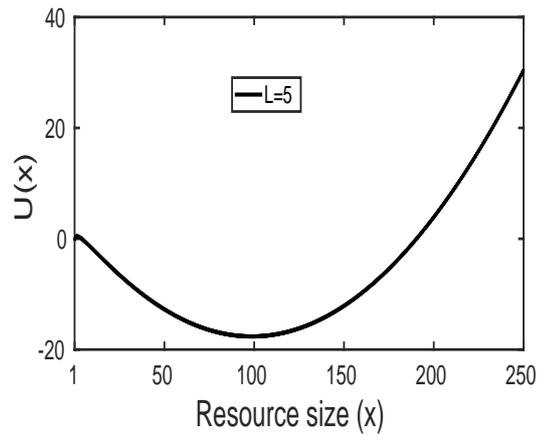}(b)
\includegraphics[height=2.5in,width=3in]{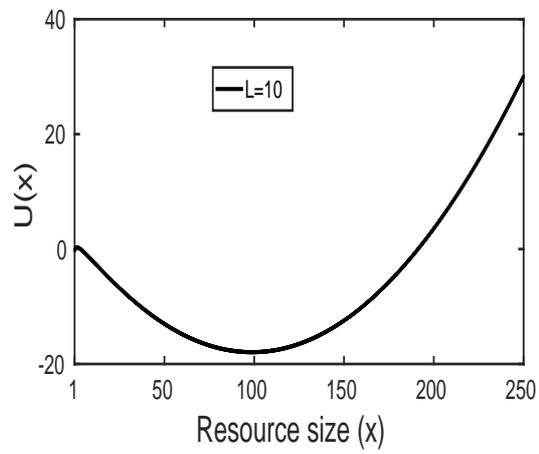}(c)
\end{center}
\caption{The panels of figures exhibit how the asymmetry in the potential landscape picture enlarges in the different pathways to catastrophic change. Hence all parameters are the same as previous, and we only vary the handling time $L$. For the fixed value of $\theta$=.74, we plot the figure for $L$=.1, 5, 10, respectively.}
\label{fig6}
\end{figure}

\begin{figure}[H]
\begin{center}
\includegraphics[height=1.8in,width=5.5in]{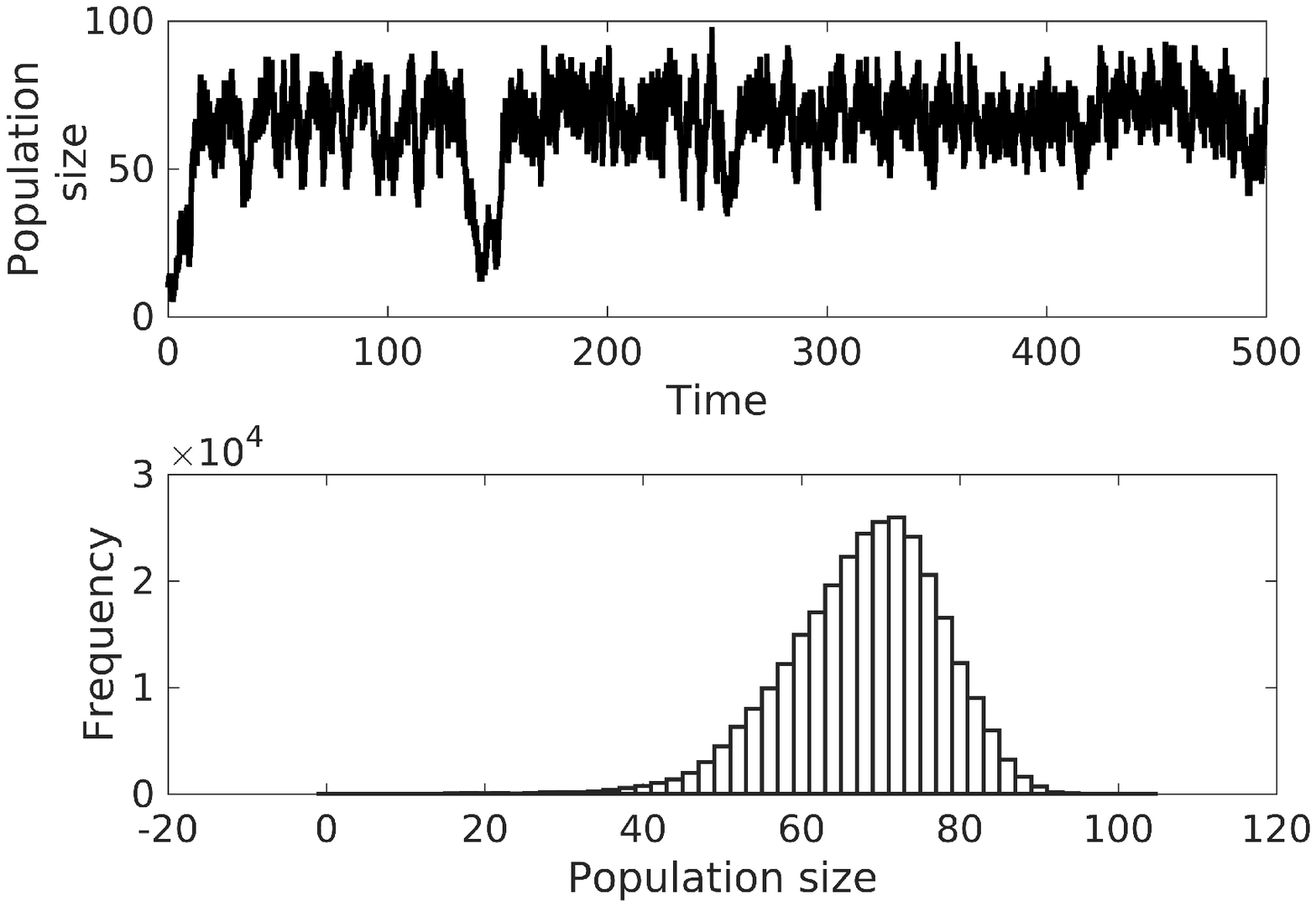}(a)
\includegraphics[height=1.8in,width=5.5in]{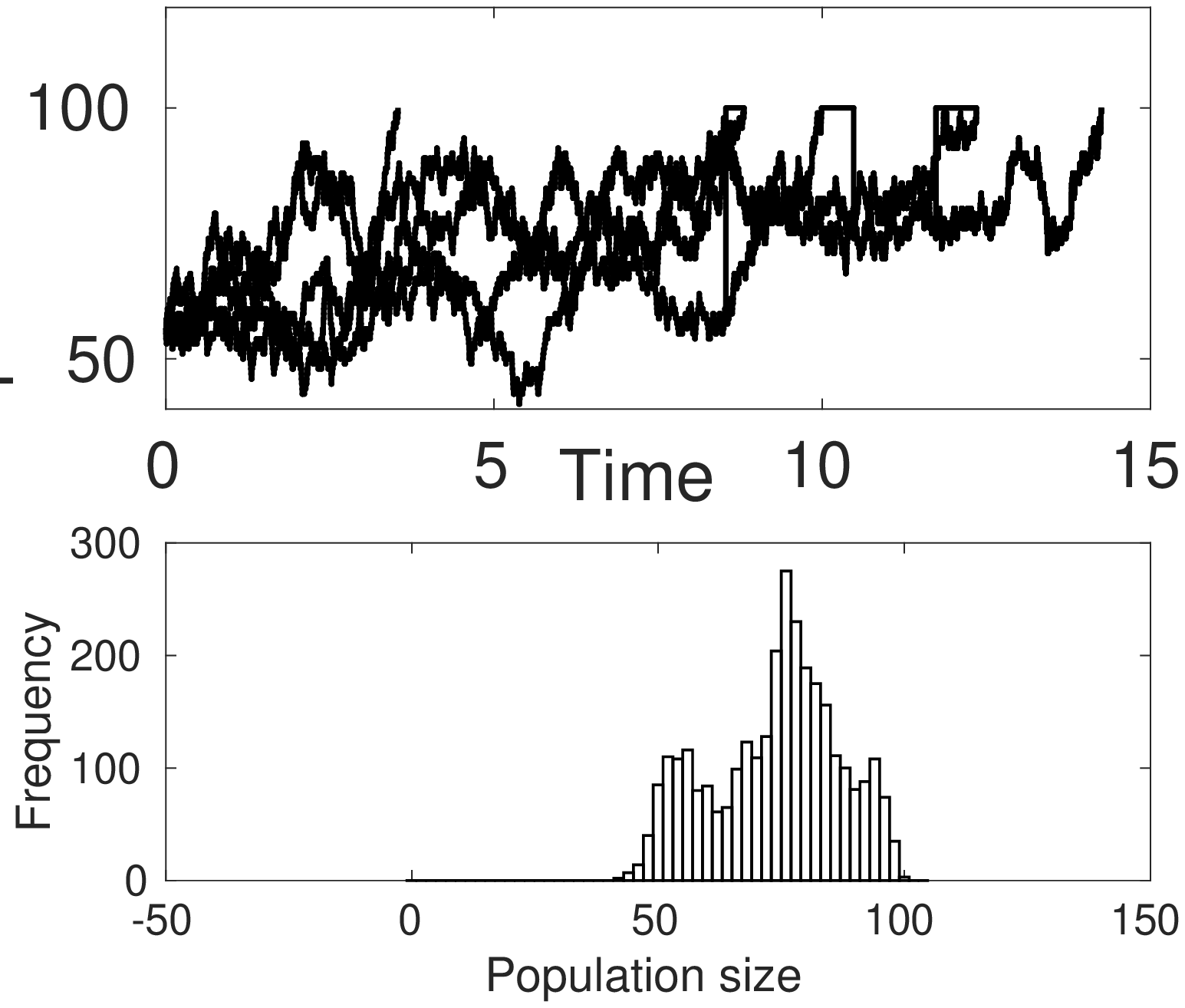}(b)
\includegraphics[height=1.8in,width=5.5in]{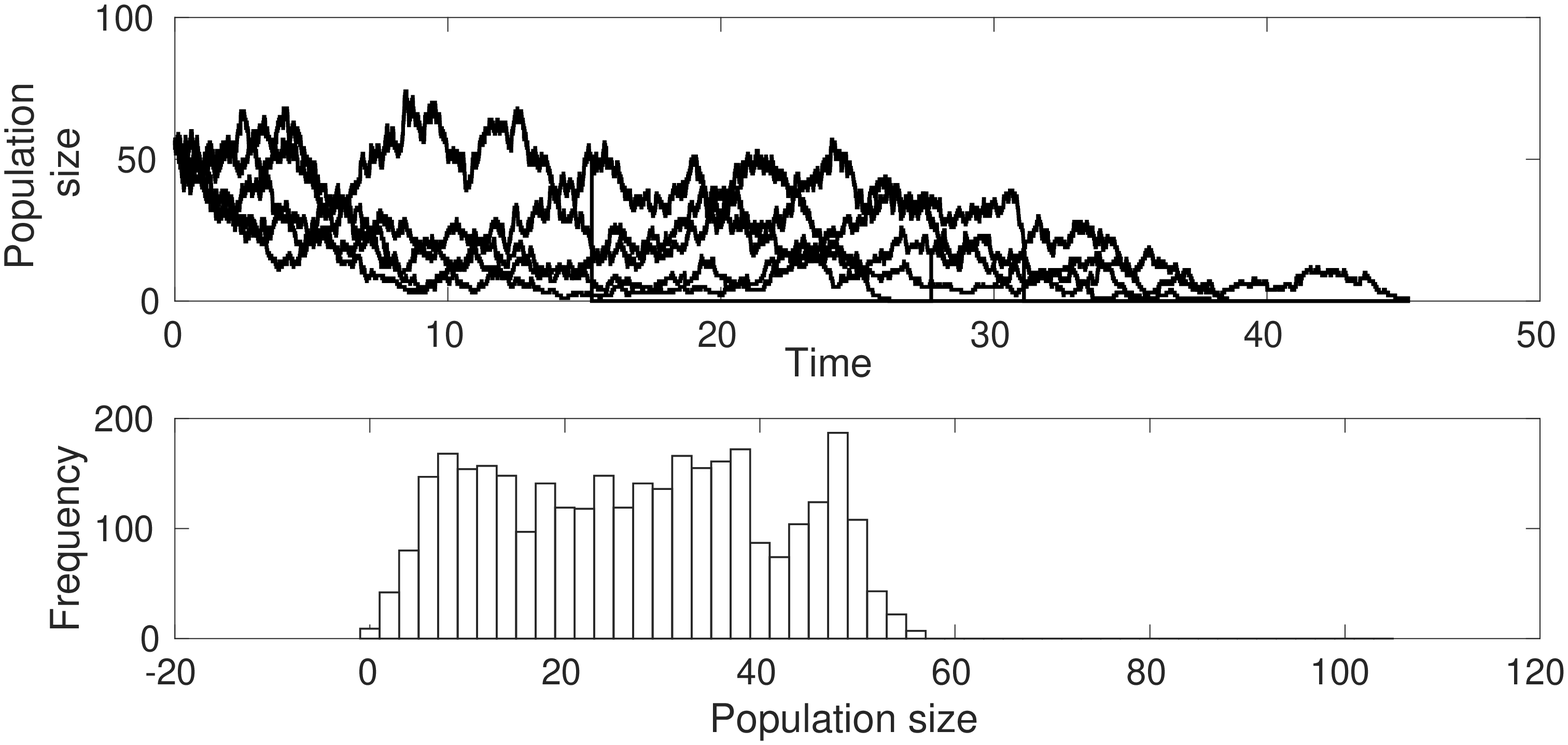}(c)
\end{center}
\caption{These figures represent numerical simulation results for the time series and its histogram when the system gradually moves to the unstable region. The increase in the asymmetry in the distribution underlies the indicator of catastrophic change. The parameter values are $r=0.2$, $K$=100, $q$=1, $a$=0.2;$e$=0.6. The  corresponding $L$ and $\theta$ are: $L$=1.12, $\theta$=4.74; $L$=0.12, $\theta$=4.74; $L$=4.10, $\theta$=0.12.}
\label{stochastic_samplepath_histogram}
\end{figure}


\newpage
\section{Data analysis}\label{The real data and discussion}

The extract of the study in the previous sections altogether reveals that, when a species remain in the upper stable equilibrium, we can prevent the species from going to the zero stable equilibrium by controlling the handling time in the nonlinear harvesting. This is one of the important mechanisms to control species sustainability. We validate our numerical results through population time-series data of herring fish population in two specific regions available in the GPDD. We consider the population time-series data of 16 years (1972-1987) with GPDD ID-1741 and 41 years (1951-1991) for GPDD ID-1772. For the data set GPDD ID-1741, the experiment was performed in Baltic sea areas covering 28 and 29 S area, and for the data set GPDD ID-1772, it was done in Prince Robert district, British Columbia. 

 Sau et al. \citep{sau2020extended} fitted these population size data (in Metric Tonnes) to the $\theta$-logistic and ASM considering both the linear and nonlinear harvest rates. The fitting through the RGR model using a grid search technique was performed for the species GPDD ID-1741, and ASM with nonlinear harvest rate is found to be the best-fitted model. For the species with GPDD ID-1772, they used the size modeling approach, where the raw population time-series data can be directly applied for estimation. They found that the $\theta$-logistic model with nonlinear harvesting is the best-fitted curve in this model.

In this article, we generate sample paths using the Gillespie algorithm. We have simulated the sample paths for different handling times; the remaining parameter values are obtained from Sau et al. \citep{sau2020extended}. The simulated time series for the herring fish population corresponding to GPDD ID 1741 shows that, a catastrophic regime shift may occur near the time point 1500 if the harvesting strategy is linear. The possible regime shift can be delayed effectively if nonlinear harvesting is applied. In figure \ref{samplepath_1741_greaterAlle_samplepath}(b) it is observed that a small increase in the handling time is sufficient to delay this shift up to 2000 time points.

 The sample path for GPDD id 1772 for different values of $L$ is depicted in figure \ref{samplepath_1742}. For this species, the $\theta$-logistic harvesting model with a nonlinear harvest rate is found to be the best-fitted model. As in the previous case, here also we observe that the sufficient increase in the handling time can delay the catastrophic change and thus can prevent population extinction (see figure \ref{samplepath_1742} (b).)   
   This observation conveys to the management authority that non-linear harvesting with suitable handling time can be permissible to prevent this species from extinction.

\subsection*{Generic early warning}
Different early warning signals can be useful to detect impending regime shift directly from the time series data. In the previous section, for the GPDD ID 1741 and ID 1772, we observe the possibility of a catastrophic shift in the fish population density from an adequately high stable population density to an extinction state in the long run when handling time is low (figure \ref{samplepath_1741_greaterAlle_samplepath}, \ref{samplepath_1742}). To find the generic early warning signals, we restrict our observation on the time series data for the population size of last 500 time points for both the species since the early warning signals may fail to predict the regime shift for a long-term time series \citep{banerjee2021chemical} . In both cases, the standard deviation, autocorrelation increase where as the return rate decreases prior to the catastrophic shift (figure \ref{Earlywarning_thetalogistic}, \ref{Earlywarning_Allee}). This analysis supports the incidence of occurrence of regime-shift. Thus the generic early warning signals are found to be robust in this case.

\begin{figure}[H]
	\begin{center}
		\includegraphics[height = 55mm, width =75mm]{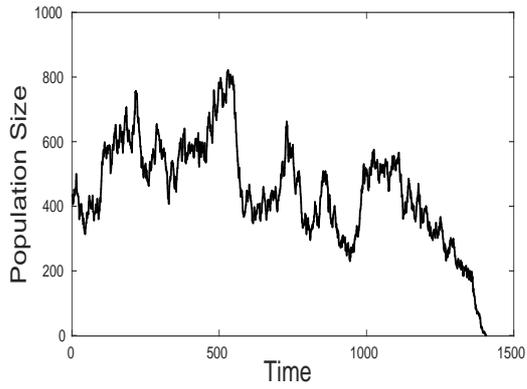}(a)
		\includegraphics[height = 53mm, width = 75mm]{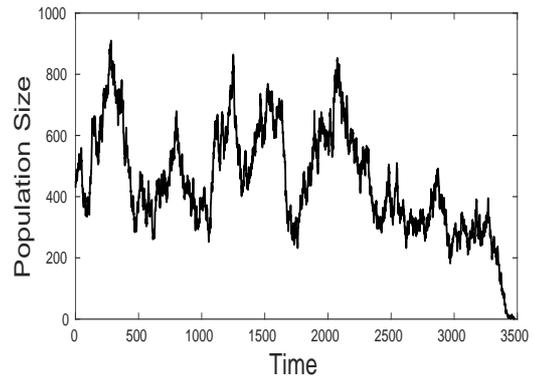}(b)
	
	\end{center}
	\caption{(a) $L=0$, $L=0.9$. Simulated sample paths for the herring population with GPDD id 1741 for different values of L. The remaining parameters are obtained from Sau et al.\citep{sau2020extended}.}
	\label{samplepath_1741_greaterAlle_samplepath}
\end{figure}

\begin{figure}[H]
	\begin{center}
		\includegraphics[height = 55mm, width =75mm]{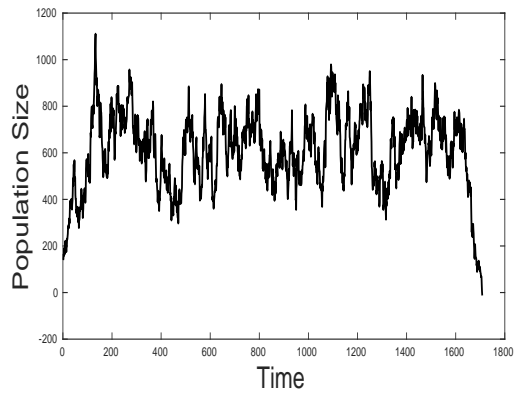} (a)
		\includegraphics[height = 53mm, width = 75mm]{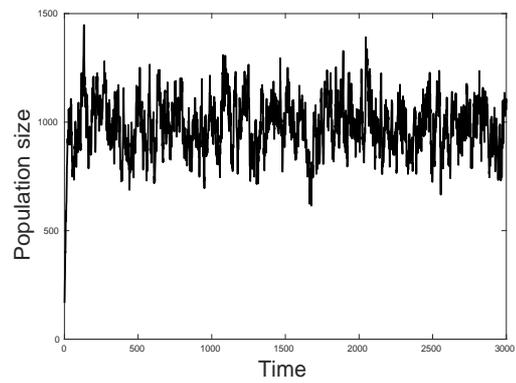} (b)
	\end{center}
	\caption{(a) linear case (L=0), (b) nonlinear case (L=3). Simulated sample paths for the herring population with GPDD id 1772 for different values of L. The remaining parameters are obtained from Sau et al.\citep{sau2020extended}.}
	\label{samplepath_1742}
\end{figure}

\begin{figure}[H]
	\begin{center}
		\includegraphics[height = 90mm, width =110mm]{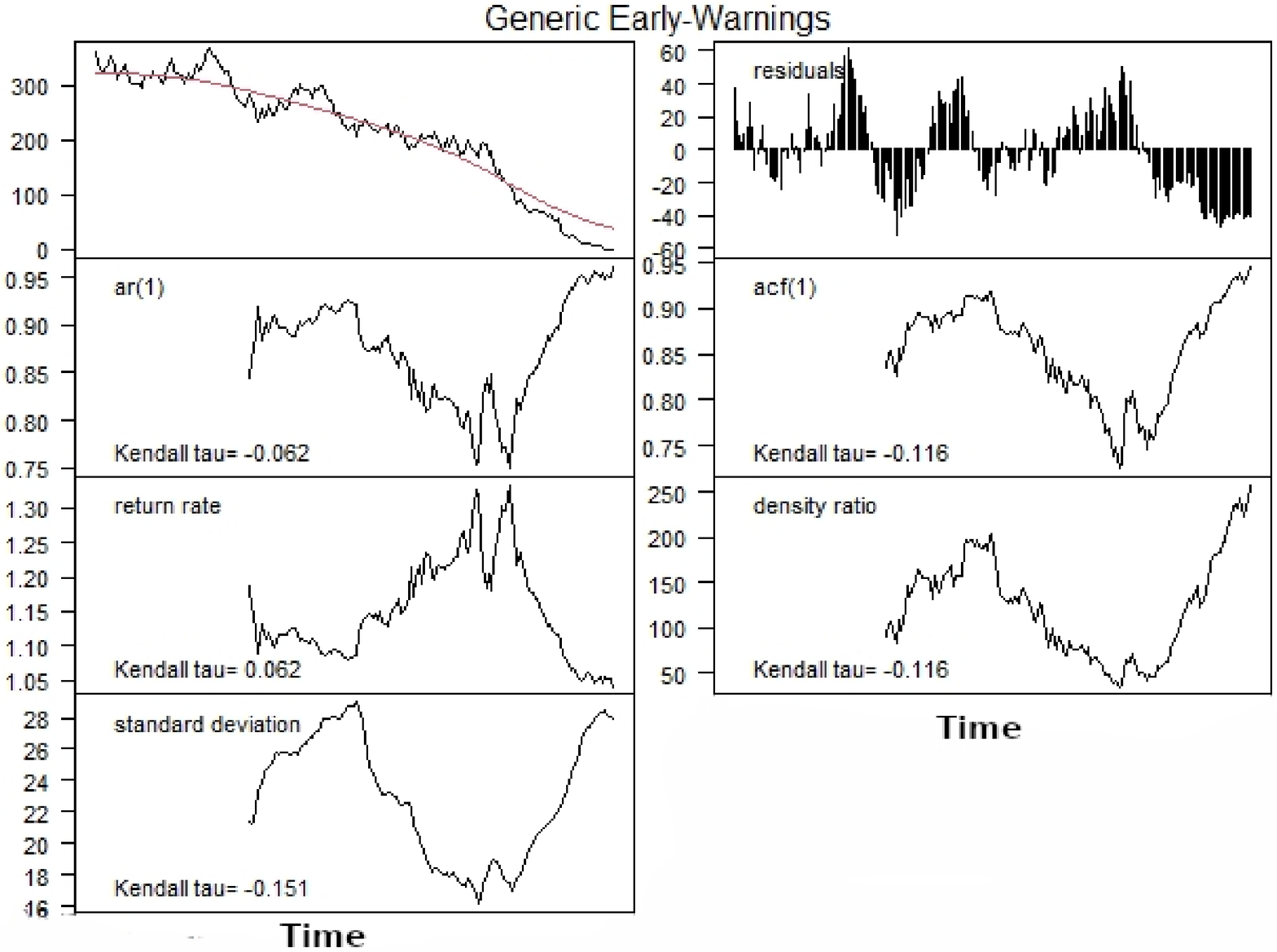}
	\end{center}
	\caption{Early warning signals for the simulated time series data corresponding to the data with GPDD ID 1741.}
	\label{Earlywarning_thetalogistic}
\end{figure}

\begin{figure}[H]
	\begin{center}
		\includegraphics[height = 90mm, width =110mm]{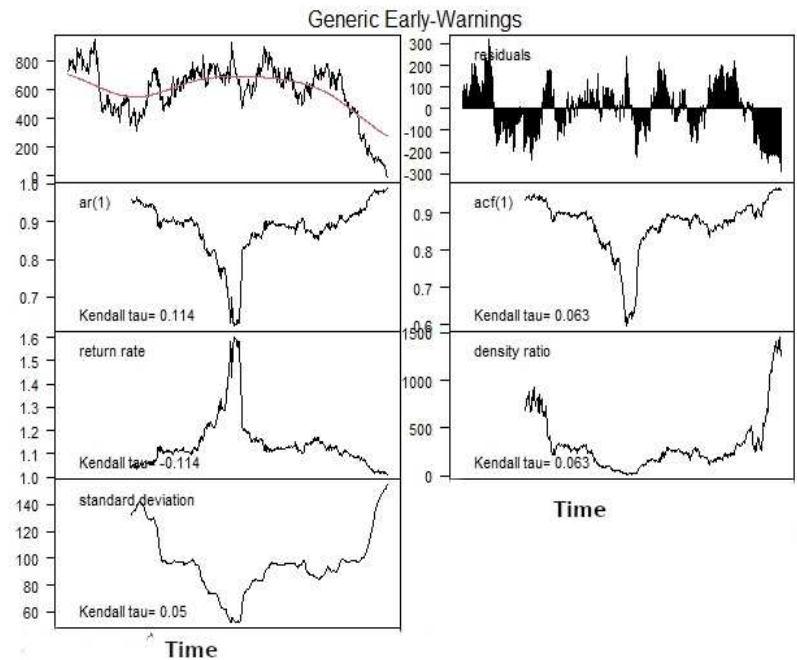}
	\end{center}
	\caption{Early warning signals for the simulated time series data corresponding to the data with GPDD ID 1772.}
	\label{Earlywarning_Allee}
\end{figure}


\section{Conclusion}\label{Discussion and conclusion}

One of the major challenging issues in ecology is maintaining an adequate abundance of different essential biological resources to prevent its extinction. Marine fisheries are salient examples of such an ecological system as the fishes are being depleted rapidly due to indiscriminate harvesting \citep{saha2013evidence}. In this context, an appropriate harvesting strategy is important to maintain the sustainability of the species. Besides this, the marine system and similar ecological systems are highly exposed to catastrophic changes, particularly when they have multiple or alternate stable states. The system with multiple stable states may be prone to extinction if one of the stable states is the extinction state.

In this article, we consider a stochastic harvesting model where the growth process of the species follows theta-logistic and Allee mechanism \cite{sau2020extended}. Stochasticity is introduced through the concept of the birth-death process. The extinction state is one of the stable states for both models. Our findings in the present work reveal that the possibility of regime shift or catastrophic changes can be restricted by the suitable choice of handling time. Note that, if the handling time ($L$) is increased, the chance of maintaining the abundance of resource population at a higher stable equilibrium point is more, and this can minimize the chance of species extinction.
%
%
%

The density regulation parameter  $(\theta)$ is one of the important structural properties of the species, and it is fixed for a particular location. In comparison, the handling time is controlled through the experimenter. The species with low theta value is more threatened, as our study suggests. Hence, in this case, it is more significant to choose proper handling time. To be more precise, in this work, we have identified the parameter region which allows bi-stability. So, for a given data set, if the estimated values of the parameters fall in this region, the biological resource manager should raise the flag to detect the possibility of a regime shift. This detection becomes urgent if one of the stable states is the extinction state. The resource management in particular the fishery management for marine or lake systems, should detect the early warning signal by applying a generic early warning toolbox or identifying the basin of attraction, or observing the asymmetry of potential function. If anyone or all of the measures indicates the possibility of regime shift, our present work can help to take proper harvesting policy by choosing suitable handling time so that the regime shift can be averted.

We establish our findings through the population time-series data of herring fish population at the Baltic sea and Prince Robert district, British Columbia. The numerical simulation suggests that there may exist a regime shift from the non-zero state to the extinction state for both species in the near future, if handling time is absent or very low. We observe that this regime shift can be avoided if handling time is taken sufficiently large.

\subsection*{Acknowledgments} We would like to acknowledge the University Grants Commission (UGC), Government of India for the financial support. Additionally we are grateful to one of our lab mates Mr. Swarnendu Banerjee for his valuable suggestions in this paper.

\bibliographystyle{plain}
\bibliography{manuscript_2ndpaper}

\begin{thebibliography}{10}

\bibitem{abbasi2019}
A.~M. Abbasi and Q.~Din.
\newblock Under the influence of crowding effects: Stability, bifurcation and
  chaos control for a discrete-time predator--prey model.
\newblock {\em INT J BIOMATH}, 12(04):1950044, 2019.

\bibitem{abbott2016alternative}
K.~C. Abbott and B.~C. Nolting.
\newblock Alternative (un) stable states in a stochastic predator--prey model.
\newblock {\em Ecol.Complex.}, 32(Part B):181--195, 2016.

\bibitem{abrams2000nature}
A.P. Abrams and L.R. Ginzburg.
\newblock The nature of predation: prey dependent, ratio dependent or neither?
\newblock {\em Trends. Ecol. Evol.}, 15(8):337--341, 2000.

\bibitem{agnew1979optimal}
T.T. Agnew.
\newblock Optimal exploitation of a fishery employing a non-linear harvesting
  function.
\newblock {\em Ecol.model.}, 6(1):47--57, 1979.

\bibitem{allen2008construction}
E.~J. Allen, L.J.S. Allen, A.~Arciniega, and P.E. Greenwood.
\newblock Construction of equivalent stochastic differential equation models.
\newblock {\em Stoch. Anal. Appl.}, 26(2):274--297, 2008.

\bibitem{allen2003comparison}
L.J.S. Allen and E.J. Allen.
\newblock A comparison of three different stochastic population models with
  regard to persistence time.
\newblock {\em Theor. Popul. Biol.}, 64(4):439--449, 2003.

\bibitem{banerjee2021chemical}
S.~Banerjee, B.~Saha, M.~Rietkerk, M.~Baudena, and J.~Chattopadhyay.
\newblock Chemical contamination-mediated regime shifts in planktonic systems.
\newblock {\em Theor. Ecol}, pages 1--16, 2021.

\bibitem{barange2008regime}
M.~Barange, Roger G.~Beaugrandand~Harris, R~Ian Perry, Marten Scheffer,
  Francisco Werner, et~al.
\newblock Regime shifts in marine ecosystems: detection, prediction and
  management.
\newblock {\em Trends. Ecol. Evol.}, 23(7):402--409, 2008.

\bibitem{bhowmick2016simple}
A.R. Bhowmick, S.~Bandyopadhyay, S.~Rana, and S.~Bhattacharya.
\newblock A simple approximation of moments of the quasi-equilibrium
  distribution of an extended stochastic theta-logistic model with non-integer
  powers.
\newblock {\em Math. Biosci.}, 271:96--112, 2016.

\bibitem{saha2013evidence}
B.Saha, A.R.Bhowmick, J.Chattopadhyay, and S.Bhattacharya.
\newblock On the evidence of an allee effect in herring populations and
  consequences for population survival: A model-based study.
\newblock {\em Ecol.model.}, 250:72--80, 2013.

\bibitem{carpenter1997dystrophy}
S.R. Carpenter and M.L. Pace.
\newblock Dystrophy and eutrophy in lake ecosystems: implications of
  fluctuating inputs.
\newblock {\em Oikos}, 78(1):3--14, 1997.

\bibitem{chattopadhyay2016allee}
A.~Chattopadhyay, B.~Saha, A.R. Bhowmick, and S.~Bhattacharya.
\newblock Allee effect and associated risk of species extinction: An empirical
  study based on global population dynamics database.
\newblock {\em Nonlinear Stud.}, 23(1):35--50, 2016.

\bibitem{clark1990mathematical}
C.W. Clark.
\newblock Mathematical bioeconomics.
\newblock 1990.

\bibitem{cline2014early}
T.J. Cline, T.J. Seekell, S.R. Carpenter, L.M. Pace, J.R. Hodgson, J.F.
  Kitchell, and C.B. Weidel.
\newblock Early warnings of regime shifts: evaluation of spatial indicators
  from a whole-ecosystem experiment.
\newblock {\em Ecosphere.}, 5(8):1--13, 2014.

\bibitem{cooke1986one}
K.L. Cooke and M.~Witten.
\newblock One-dimensional linear and logistic harvesting models.
\newblock {\em Math. Model.}, 7(2-3):301--340, 1986.

\bibitem{courchamp2008allee}
F.~Courchamp, L.~Berec, and J.~Gascoigne.
\newblock {\em Allee effects in ecology and conservation}.
\newblock Oxford University Press, 2008.

\bibitem{crepin2012regime}
A.S. Crepin, R.~Biggs, S.~Polasky, M.~Troell, and A.~de~Zeeuw.
\newblock Regime shifts and management.
\newblock {\em Ecol Econ.}, 84:15--22, 2012.

\bibitem{gardiner1985input}
C.W. Gardiner and M.J Collett.
\newblock Input and output in damped quantum systems: Quantum stochastic
  differential equations and the master equation.
\newblock {\em Phys. Rev. A.}, 31(6):37--61, 1985.

\bibitem{ghosh2014sustainable}
B.~Ghosh and T.K. Kar.
\newblock Sustainable use of prey species in a prey--predator system: Jointly
  determined ecological thresholds and economic trade-offs.
\newblock {\em Ecol.model.}, 272:49--58, 2014.

\bibitem{guttal2008changing}
V.~Guttal and C.~Jayaprakash.
\newblock Changing skewness: an early warning signal of regime shifts in
  ecosystems.
\newblock {\em Ecol. Lett.}, 11(5):450--460, 2008.

\bibitem{hilker2009allee}
F.~M. Hilker, M.~Langlais, and H.~Malchow.
\newblock The allee effect and infectious diseases: extinction, multistability,
  and the (dis-) appearance of oscillations.
\newblock {\em AM. NAT.}, 173(1):72--88, 2009.

\bibitem{houoptimal}
J.Hou, Y.~Wang, and Z.~Luo.
\newblock The optimal harvesting of a stochastic gilpin-ayala model under
  regime switching.
\newblock {\em Science Journal of Applied Mathematics and Statistics},
  4(6):276--283, 2016.

\bibitem{KAR2013134}
T.K. Kar and Bapan Ghosh.
\newblock Impacts of maximum sustainable yield policy to prey–predator
  systems.
\newblock {\em Ecol. Model.}, 250:134 -- 142, 2013.

\bibitem{karlin}
S.~Karlin and H.~M. Taylor.
\newblock {\em A Second Course in Stochastic Processes}.
\newblock Academic press, 1981.

\bibitem{may1977thresholds}
R.~M. May.
\newblock Thresholds and breakpoints in ecosystems with a multiplicity of
  stable states.
\newblock {\em Nature}, 269(5628):471--477, 1977.

\bibitem{polovina2005climate}
J.~J. Polovina.
\newblock Climate variation, regime shifts, and implications for sustainable
  fisheries.
\newblock {\em Bull. Mar. Sci.}, 76(2):233--244, 2005.

\bibitem{sibly2005regulation}
R.M.Sibly, D.Barker, M.C.Denham, J.Hone, and M.Pagel.
\newblock On the regulation of populations of mammals, birds, fish, and
  insects.
\newblock {\em Science.}, 309(5734):607--610, 2005.

\bibitem{sau2020extended}
A.~Sau, B.~Saha, and S.~Bhattacharya.
\newblock An extended stochastic allee model with harvesting and the risk of
  extinction of the herring population.
\newblock {\em J. Theor. Biol.}, 503:110375, 2020.

\bibitem{scheffer2003catastrophic}
M.~Scheffer and R.S. Carpenter.
\newblock Catastrophic regime shifts in ecosystems: linking theory to
  observation.
\newblock {\em Trends Ecol Evol.}, 18(12):648--656, 2003.

\bibitem{scheffer2001catastrophic}
M.~Scheffer, S.~Carpenter, A.J. Foley, C.~Folke, and B.~Walker.
\newblock Catastrophic shifts in ecosystems.
\newblock {\em Nature}, 413(6856):591--596, 2001.

\bibitem{sharma2015stochasticity}
Y.~Sharma, K.C. Abbott, and P.S.~Dutta andA.K Gupta.
\newblock Stochasticity and bistability in insect outbreak dynamics.
\newblock {\em Theor. Ecol.}, 8(2):163--174, 2015.

\bibitem{steele2004regime}
J.H. Steele.
\newblock Regime shifts in the ocean: reconciling observations and theory.
\newblock {\em Prog. Oceanogr.}, 60(2-4):135--141, 2004.

\bibitem{swift2002stochastic}
J.L. Swift.
\newblock A stochastic predator-prey model.
\newblock {\em Irish Math. Soc. Bull}, 48:57--63, 2002.

\bibitem{tuckwell2018elementary}
C.H. Tuckwell.
\newblock {\em Elementary applications of probability theory}.
\newblock Routledge, 2018.

\bibitem{walters2001cultivation}
C.~Walters and J.F. Kitchell.
\newblock Cultivation/depensation effects on juvenile survival and recruitment:
  implications for the theory of fishing.
\newblock {\em Can. J. Fish. Aquat. Sci.}, 58(1):39--50, 2001.

\bibitem{wang2018modeling}
B.~Wang and Q.~Qianqian.
\newblock Modeling the lake eutrophication stochastic ecosystem and the
  research of its stability.
\newblock {\em Math. Biosci.}, 300:102--114, 2018.

\bibitem{wissel1987avoid}
C.~Wissel and T.~Schmitt.
\newblock How to avoid extinction of populations optimally exploited.
\newblock {\em Math. Biosci.}, 84(2):127--138, 1987.

\bibitem{kang2014dynamics}
A.~R.~Bhowmick Y.~Kang, S. K.~Sasmal and J.~Chattopadhyay.
\newblock Dynamics of a predator-prey system with prey subject to allee effects
  and disease.
\newblock {\em MATH BIOSCI ENG}, 11(4):877--918, 2014.

\end{thebibliography}
\clearpage

\section*{Appendix}
\subsubsection*{Appendix A (i): The condition for birth and death rates for $\theta$-logistic model}

In the model \ref{general_model_harvest}, we consider the birth rate as  
\begin{eqnarray*}
b(x)&=&(r+1)x-\frac{x^{\theta+1}}{2K^\theta}
\end{eqnarray*}
and the death rates corresponding to linear and nonlinear harvest rate are respectively
\begin{eqnarray*}
 d(x)&= & x+\frac{x^{\theta+1}}{2K^\theta} + qex \\
 d(x)&=& x+\frac{x^{\theta+1}}{2K^\theta} + \frac{qex}{aE+Lx}
\end{eqnarray*}
 
\begin{thm}
 If $\left(r-\frac{q}{a}\right)>0$ then $\exists$ $M$, $N$ such that $b(x)>d(x)$ $\forall$ $x$ $\in$ $(0,M)$ and $b(x)<d(x)$ $\forall$ $x$ $\in$ $(M,N)$
\end{thm}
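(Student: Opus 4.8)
The plan is to reduce the two-sided inequality to the sign of a single scalar function and then track that sign. First I would subtract the two rates directly. The linear parts give $(r+1)x-x=rx$, the two $\frac{x^{\theta+1}}{2K^\theta}$ contributions combine into $-\frac{x^{\theta+1}}{K^\theta}$, and the harvesting term survives, so
\[
b(x)-d(x)=rx-\frac{x^{\theta+1}}{K^\theta}-\frac{qEx}{aE+Lx}=x\,g(x),\qquad g(x):=r-\left(\frac{x}{K}\right)^\theta-\frac{qE}{aE+Lx}.
\]
Because $x>0$ throughout, the inequalities $b>d$ and $b<d$ are literally $g>0$ and $g<0$, so the whole statement reduces to tracking the sign of the continuous scalar function $g$ on $(0,\infty)$.

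Next I would record the boundary behaviour of $g$. As $x\to0^+$ the power term vanishes and the harvesting fraction tends to $qE/(aE)=q/a$, so $g(0^+)=r-\frac{q}{a}$, which is strictly positive precisely under the hypothesis $r-\frac{q}{a}>0$; hence $b>d$ on a right-neighbourhood of $0$. At the other end $(x/K)^\theta\to\infty$ while the harvesting term stays bounded, so $g(x)\to-\infty$; more usefully, once $(x/K)^\theta>r$, i.e. $x>Kr^{1/\theta}$, we get $g(x)<r-(x/K)^\theta<0$, so every sign change of $g$ is confined to the bounded interval $(0,Kr^{1/\theta})$. By the intermediate value theorem $g$ has at least one zero there, and I would set $M$ to be the smallest such zero; this gives $g>0$ on $(0,M)$ and $g(M)=0$ at once.

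The remaining step, and the one I expect to be the crux, is to upgrade this to a genuine sign change: I need an $N>M$ with $g<0$ on all of $(M,N)$, equivalently that $g$ does not climb back to zero after $M$. For $\theta\ge1$ this is immediate because $g$ is concave---$r-\frac{qE}{aE+Lx}$ is increasing and concave while $(x/K)^\theta$ is convex---so a function positive at $0^+$ and tending to $-\infty$ can vanish only once, and any $N\in(M,Kr^{1/\theta}]$ works. For $\theta<1$, the ecologically relevant regime stressed in the paper, $g$ need not be concave or monotone: the derivative $g'(x)=-\frac{\theta}{K^\theta}x^{\theta-1}+\frac{qEL}{(aE+Lx)^2}$ can change sign on an intermediate range, so the purely qualitative argument fails. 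Here my plan is to localise the non-monotonicity: show that any interval on which $g$ increases lies at small $x$, where the deficit from $g(0^+)=r-\frac{q}{a}$ is only of order $(x/K)^\theta$ and $g$ therefore stays strictly positive, while beyond the global maximum the $x^{\theta-1}$ decay of the first term of $g'$ dominates the $(aE+Lx)^{-2}$ decay of the second, forcing $g'<0$ and a strictly monotone descent through the unique zero $M$. Turning this dominance-of-decay heuristic into a clean inequality valid for all admissible parameters---rather than merely asserting that $g$ is decreasing---is the part that will require the most care.
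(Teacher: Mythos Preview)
Your approach is essentially the paper's: both compute $b(x)-d(x)$, use the hypothesis to get positivity near $0$ (the paper via $f(0)=0$, $f'(0)=r-q/a>0$; you via $g(0^+)=r-q/a>0$ after factoring out $x$), observe negativity for large $x$ (the paper simply notes $f(x)<0$ for $x\ge K$), and invoke the intermediate value theorem to locate $M$.

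Where the two diverge is in rigour, not strategy. The paper's proof writes ``Hence $\exists\,M\in(0,K)$ such that $f(M)=0$ and $f(x)>0$ for $x\in(0,M)$ and $f(x)<0$ for all $x>M$'' and stops there---it simply asserts the global sign change without justification. You correctly flag this as the crux and supply a concavity argument for $\theta\ge1$ and an honest sketch for $\theta<1$; in that sense your proposal is \emph{more} careful than the paper's own proof. Note, however, that the theorem as stated only asks for \emph{some} $N>M$ with $b<d$ on $(M,N)$, not $b<d$ on the whole of $(M,\infty)$, so full uniqueness of the zero is stronger than required; once $M$ is the smallest zero, real-analyticity of $g$ on $(0,\infty)$ makes its zeros isolated, and the only obstruction is the non-generic case of a tangential zero, which your concavity argument rules out for $\theta\ge1$. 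The one thing the paper adds that you omit is an explicit choice $N=K[2(r+1)]^{1/\theta}$, the point where $b$ itself vanishes, tying $N$ back to the regularity conditions listed before the theorem. Finally, a cosmetic discrepancy: your $g(x)=r-(x/K)^\theta-\frac{qE}{aE+Lx}$ follows from the stated $b,d$, whereas the paper's proof writes $b-d=rx\bigl(1-(x/K)^\theta\bigr)-\frac{qEx}{aE+Lx}$ (an extra factor $r$ on the power term); this is an internal inconsistency in the paper, not an error on your part, and the argument is unaffected.
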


\begin{proof}
 $b(x)-d(x)= rx\left(1-\left(\frac{x}{K}\right)^\theta\right)-\frac{qEx}{aE+Lx} =f(x)$ (say). 
  Now $f(0)=0$ and $f'(0)= \left(r-\frac{q}{a}\right)>0$ (from our assumption).
  Again $f(x)<0$ $\forall$ $x\geq K$. Hence $\exists$ $M\in (0,K)$ such that $f(M)=0$ and $f(x)>0$ $\forall$ $x\in (0,M)$ and $f(x)<0$ $\forall$ $x > M$.\\
   We have,\\
 $b(x)= x\left[(r+1)-\frac{1}{2}\left(\frac{x}{K}\right)^\theta\right]>0$
 $\Rightarrow x<K\left[2(r+1)\right]^\frac{1}{\theta}$\\
 We assign $N= K\left[2(r+1)\right]^\frac{1}{\theta}$. Clearly $N>0$ as $r>0$ and $b(x)>0$ if $x$ $\in$ $(0,N)$.
\end{proof}

\noindent\textbf{Remark:} Similarly proceeding as above we can find $M$ and $N$ for proportional harvesting also. 

\subsubsection*{Appendix (ii): The condition for birth rate and death rate for the ASM}
 The ASM is\\
$\frac{dx}{dt}= rx\left(\frac{x}{K}-\frac{A}{K}\right)\left(1-\left(\frac{x}{K}\right)^\theta\right)$\\
$= r\frac{A+K}{K^{\theta+1}}x^{\theta+1}\left(1-\frac{x}{A+x}\right)-\frac{rA}{K^\theta}x^\theta $
$= \frac{rx^2}{K^{\theta+1}}\left(K^\theta+Ax^{\theta-1}\right)\left[1-\frac{x^\theta}{K^\theta+Ax^{\theta-1}}\right] +\frac{rax}{K}$.\\
Introducing the nonlinear harvesting phenomena, the model will be \\
$\frac{dx}{dt}= rx\left(\frac{x}{K}-\frac{A}{K}\right)\left(1-\left(\frac{x}{K}\right)^\theta\right) + \frac{qEx}{aE+Lx}$\\
In this case we can write the model equivalent to $b(x)-d(x)$,\\
where, $b(x) =\frac{rx^2}{K^{\theta+1}}\left(K^\theta+Ax^{\theta-1}\right)\left[1-\frac{x^\theta}{K^\theta+Ax^{\theta-1}}\right]$ and~ $d(x)= \frac{rax}{K} + \frac{qEx}{aE+Lx} $\\
\noindent Here we observe that $ b(x)$ and $d(x)$ are always positive in the interval $(0,N]$. To justify $b(x)<d(x)$, we have,~
$rx\left(\frac{x}{K}-\frac{A}{K}\right)\left(1-\left(\frac{x}{K}\right)^\theta\right)-\frac{qEx}{aE+Lx} < 0$\\
For the above inequality, the sufficient condition is, 
\begin{eqnarray*}
&& \frac{r}{K}(x-A)-\frac{qE}{aE+Lx}<0  ~~[\mbox {Putting $(\frac{x}{K})^\theta = 0$}]\\
&& \Rightarrow (x-A)(aE+Lx)-\frac{qEK}{r}<0\\
&& \Rightarrow Lx^2-(LA-aE)-\left(AaE+\frac{qEK}{r}\right)<0\\
&& \Rightarrow x= \frac{(LA-aE)\pm\sqrt{(LA-aE)^2+4L\left(AaE+\frac{qEK}{r}\right)}}{2L}\\
&& \Rightarrow (x-\alpha)(x-\beta)<0
 \end{eqnarray*}
where, 
\begin{eqnarray*}
 \alpha = \frac{(LA-aE)+\sqrt{(LA-aE)^2+4L\left(AaE+\frac{qEK}{r}\right)}}{2L},~~~
 \beta = \frac{(LA-aE)-\sqrt{(LA-aE)^2+4L\left(AaE+\frac{qEK}{r}\right)}}{2L}
\end{eqnarray*}
 Let us consider $\beta = -m$ (say),$m>0$. Therefore $(x-\beta)>0$,
 $\Rightarrow (x-\alpha)<0$, $\Rightarrow x<\alpha$. We can assume $A^* = \alpha$
\begin{equation}
 r\left(\frac{A^*}{K}-\frac{A}{K}\right)\left(1-\left(\frac{M}{K}\right)^\theta\right)> \frac{qE}{aE+LA^*}
\end{equation}

\noindent\textbf{Remark:} We can apply the same process to find $M$ and $N$ for proportional harvesting also.

%
%

\end{document}